\def\R{\mathbb{R}}
\def\eps{\varepsilon}
\def\E{{\mathbb E}}
\def\P{{\mathcal P}}
\def\X{{\mathcal X}}
\def\Y{{\mathcal Y}}
\def\M{{\mathcal M}}
\def\N{{\mathcal N}}
\def\L{{\mathcal L}}
\def\sQ{{\mathsf Q}}
\def\sK{{\mathsf K}}
\def\sEs{{\mathsf E_{e^{\varepsilon}}}}
\def\sE{{\mathsf E}}
\newcommand{\al}[1]{\begin{align*}
#1
\end{align*}}
\newcounter{example}
\newenvironment{example}[1][]{\refstepcounter{example}\par\medskip
   \noindent \textit{Example~\theexample. #1} \rmfamily}{\medskip}
\newtheorem{definition}{Definition}
\newtheorem{theorem}{Theorem}
\newtheorem{proposition}{Proposition}
\newtheorem{lemma}{Lemma}
\tikzstyle{RectObject}=[rectangle,fill=white,draw,line width=0.2mm]
\tikzstyle{line}=[draw]
\tikzstyle{arrow}=[draw, -latex]
\DeclareFontFamily{U}{BOONDOX-calo}{\skewchar\font=45 }
\DeclareFontShape{U}{BOONDOX-calo}{m}{n}{
	<-> s*[1.05] BOONDOX-r-calo}{}
\DeclareFontShape{U}{BOONDOX-calo}{b}{n}{
	<-> s*[1.05] BOONDOX-b-calo}{}
\DeclareMathAlphabet{\mathcalboondox}{U}{BOONDOX-calo}{m}{n}
\SetMathAlphabet{\mathcalboondox}{bold}{U}{BOONDOX-calo}{b}{n}
\DeclareMathAlphabet{\mathbcalboondox}{U}{BOONDOX-calo}{b}{n}
\definecolor{DukeBlue}{HTML}{001A57}
\definecolor{DarkRed}{rgb}{0.75, 0.0, 0.0}
\definecolor{DarkGreen}{rgb}{0.0, 0.5, 0.0}
\author{}
\date{}
\begin{document}

	\title{\vspace{5.5mm}Privacy Amplification of Iterative Algorithms via Contraction Coefficients}

\author{%
Shahab Asoodeh${}^\dagger$, Mario Diaz${}^*$, and Flavio P. Calmon${}^\dagger$ \\
\small ${}^\dagger$Harvard University, \{shahab, flavio\}@seas.harvard.edu \\
$^*$Universidad Nacional Aut\'{o}noma de M\'{e}xico, mario.diaz@sigma.iimas.unam.mx
}
	
	\maketitle
\begin{abstract}
	We investigate the framework of privacy amplification by iteration, recently proposed by Feldman et al., from an information-theoretic lens. We demonstrate that differential privacy guarantees of iterative mappings can be determined by a direct application of contraction coefficients derived from strong data processing inequalities for $f$-divergences. In particular, by generalizing the Dobrushin's contraction coefficient for total variation distance to an $f$-divergence known as $\sE_\gamma$-divergence, we derive tighter bounds on the differential privacy parameters of the projected noisy stochastic gradient descent algorithm with hidden intermediate updates. 
	\end{abstract}

\section{Introduction and Motivation}
Differential privacy (DP) \cite{Dwork_Calibration, Dwork-OurData} has become the standard definition for designing privacy-preserving machine learning algorithms. One reason for its success is its \textit{operational} significance, which can be best described in terms of binary hypothesis testing (see, e.g., \cite{Wasserman, Kairouz_Composition}). Nevertheless, it is often difficult to compute DP guarantees for applications where a high number of data accesses is needed for a single analysis \cite{Abadi_MomentAccountant, Balle:Subsampling}. To obtain the DP parameters in such applications, which include machine learning models trained using stochastic gradient descent (SGD), one needs to resort to composition theorems which are often loose due to their generality. As a remedy, several variants of DP have been recently proposed \cite{RenyiDP,Concentrated_Dwork,ZeroDP,TruncatedDP} based on  R\'enyi divergence. These variants enjoy better composition properties. Among these variants, R\'enyi DP (RDP) has proven to be effective in studying private deep learning algorithms \cite{Abadi_MomentAccountant} especially when paired with \textit{sub-sampling} techniques \cite{Shiva_subsampling}. 

 Recently, the new framework of \textit{privacy amplification by iteration} was proposed by Feldman et al. \cite{Feldman2018PrivacyAB} as an alternative to privacy amplification by sub-sampling. This  framework possesses several advantages which makes it  well-suited for determining and enforcing privacy in distributed settings where data samples are stored locally by each user. Existing private algorithms based on sub-sampling require hiding the set of users participating in each update step of the model. This requirement, however, dictates either all data samples be stored centrally (i.e., no distributed setting) or all-to-all communication (i.e., excessive communication complexity). 
The new framework of privacy amplification by iteration relaxes these issues; it does not require the order of participating users to be random or hidden. On the other hand, it requires that all intermediate updates be hidden until a certain number of update steps are applied (e.g., not disclosing model update of SGD before a pre-specified step, say, $n$-th step).  

Since the intermediate updates are assumed to be hidden, one can view an iterative process as a concatenation of channels. To see this, let $\{\psi_t\}_{t=1}^n$ be a sequence of mapping and the update rule be given by \begin{equation}\label{Eq:Iterative_Vitaly}
    Y_{t} = \psi_{t}(Y_{t-1}) + Z_{t},
\end{equation} 
where $Y_0 = y_0\in \R^d$ and $\{Z_t\}_{t=1}^n$ are i.i.d.\ copies of a noise distribution $P_Z$. Let $\{Y'_t\}_{t=1}^n$ be the output of the same process started at $Y'_0= y'_0\in \R^d$. Letting $\mu_t$ and $\nu_t$ be the distributions of $Y_t$ and $Y_t'$, the \textit{strong} data processing inequality (SDPI) for $f$-divergences (see, e.g., \cite{Raginsky_SDPI, Anurak_SDPI}) implies that 
\begin{equation}\label{SDPI_f}
    D_f(\mu_n\|\nu_n)\leq D_f(\mu_1\|\nu_1)\prod_{t=1}^n\eta_f(\sK_t),
\end{equation}
where $D_f$ is an $f$-divergence and $\eta_f(\sK_t)$ is the \textit{contraction coefficient} (also known as strong data processing constant) of the Markov kernel $\sK_t(y) \coloneqq  P_{Y_t|Y_{t-1}=y} = P_{Z+\psi_t(y)}$ under $f$-divergence (see Section~\ref{Section:ContractionEgamma} for details). By exploiting the connection between DP and a certain $f$-divergence known as $\sE_\gamma$-divergence, we build upon \eqref{SDPI_f} to obtain bounds for DP parameters of iterative processes. Specifically, we study the noisy stochastic gradient descent algorithm and obtain tighter bounds for its DP parameters than those provided currently in the literature \cite{Feldman2018PrivacyAB,Balle2019mixing}.    
To do so, we obtain a closed-form expression for the contraction coefficient of Markov kernels under $\sE_\gamma$-divergence that generalizes the well-known Dobrushin's theorem \cite{Dobrushin}.

Our approach is inspired by the original work of Feldman et al.\ \cite{Feldman2018PrivacyAB}. They adopted RDP as the measure of privacy and proved the following SDPI result \cite[Theorem 1]{Feldman2018PrivacyAB} for the R\'enyi divergence of order $\alpha>1$: For the iterative process described in \eqref{Eq:Iterative_Vitaly} with $P_Z$ the Gaussian distribution $\N(0, \sigma^2\mathrm{I}_d)$,
\begin{equation}\label{SDPI_Renyi}
    D_\alpha(\mu_n\|\nu_n)\leq \frac{1}{n}D_\alpha(\mu_1\|\nu_1) = \frac{1}{n}\frac{\alpha \|y_0-y'_0\|}{2\sigma^2}.
\end{equation}
Despite its tractability, RDP lacks a clear operational interpretation. As a result, RDP guarantees are usually translated to DP guarantees via a transformation which is known to be loose, see, e.g.,  \cite[Proposition 3]{RenyiDP}. 

As a special case of iterative processes, we consider the \textit{noisy} SGD algorithm with Laplacian or Gaussian perturbation. Our empirical analyses show that the DP parameters of noisy SGD obtained by our approach are smaller than that of \cite{Feldman2018PrivacyAB, Balle2019mixing} (after applying the RDP to DP transformation). To capture common practice in machine learning applications, the input alphabet of the Markov kernels in this work are assumed to be compact. As a result, our analysis of contraction coefficients of such kernels is akin to the analysis of input-constrained channels performed by \cite{Yury_Dissipation}.

\section{Background}
\label{Sec:Background}

In this section, we briefly review privacy mechanisms, $f$-divergences and contraction coefficients. We also review a relation between DP and $\sE_\gamma$-divergence.

\subsection{Privacy Mechanisms}

The following examples describe two typical privacy mechanisms used in machine learning.

\begin{example}[(Private Queries)]
Let $\mathcal{X}$ be an arbitrary alphabet. A query is a function $f$ that takes a sample $\mathbb{D} \in \mathcal{X}^n$ and produces a {\it response} $y$ in the space of responses $\mathcal{Y}$. In this setting, a privacy mechanism $\sK$ takes a response $y\in\mathcal{Y}$ and produces another (random) response in the same space. In general, a privacy mechanism can be described by a Markov kernel $\sK:\mathcal{Y}\to\mathcal{P}(\mathcal{Y})$, i.e., a channel with the same input and output space $\Y$,  where $\mathcal{P}(\mathcal{Y})$ denotes the set of probability measures over $\mathcal{Y}$. Thus, the private query, say $\mathcal{M}$, is a random variable satisfying  $\mathcal{M}(\mathbb{D}) \sim \sK(f(\mathbb{D}))$.
\end{example}

\vspace{-5pt}

\begin{example}[(Stochastic Optimization)]
\label{Example:StochasticOptimization}
Let $\mathcal{Y}$ denote a parameter space, e.g., the coefficients in a linear regression model. Given a dataset $\mathbb{D} = \{x_1,\ldots,x_n\}\in \X^n$, typical stochastic optimization methods take an initial point $Y_0 \sim \mu_0 \in \mathcal{P}(\mathcal{Y})$ and further refine it through a random optimization process. The latter process typically depends on the dataset $\mathbb{D}$ and can be encoded by a Markov kernel $\sK_{\mathbb{D}}:\mathcal{Y}\to\mathcal{P}(\mathcal{Y})$. Furthermore, in some cases it is of iterative form, e.g., stochastic gradient descent, and the kernel $\sK_{\mathbb{D}}$ can be decomposed as $\sK_{\mathbb{D}} = \sK_{x_1} \cdots \sK_{x_n}$. Here, the randomness of the initial point and the optimization process may provide some level of privacy.
\end{example}

Motivated by the previous examples, we model privacy mechanisms as random mappings taking a data set $\mathbb{D}\in\mathcal{X}^n$ as input and producing an element in a given set $\mathcal{Y}$ as output. Furthermore, we assume that any privacy mechanism, say $\mathcal{M}$, is a random variable satisfying 
\begin{equation*}
    \mathcal{M}(\mathbb{D}) \sim \mu_0 \sK \coloneqq \int\mu_0(\text{d}y)\sK(y),
\end{equation*}
where the measure $\mu_0\in\mathcal{P}(\mathcal{Y})$ and the kernel $\sK:\mathcal{Y}\to\mathcal{P}(\mathcal{Y})$ may depend on $\mathbb{D}$, i.e., $\mu_0 = (\mu_0)_{\mathbb{D}}$ and $\sK = \sK_{\mathbb{D}}$.

\subsection{$f$-Divergence and Contraction Coefficients}

Given a convex function $f:(0,\infty)\to\mathbb{R}$ with $f(1)=0$, $f$-divergence between two probability measures $\mu$ and $\nu$ is defined in \cite{Ali1966AGC,Csiszar67} as
\begin{equation*}
    D_f(\mu\|\nu)\coloneqq \E_{\nu}\left[f\left(\frac{\text{d}\mu}{\text{d}\nu}\right)\right].
\end{equation*}

Let $\sK:\Y\mapsto\P(\Y)$ be a Markov kernel. Following the definition from Ahslwede and G\'acs \cite{ahlswede1976}, we define the \textit{contraction coefficient} (or strong data processing coefficient) of $\sK$ under $f$-divergence as
\begin{equation*}
    \eta_f(\sK)\coloneqq \sup_{\mu, \nu: \atop D_f(\mu\|\nu)\neq 0}\frac{D_f(\mu \sK\|\nu\sK)}{D_f(\mu\|\nu)}.
\end{equation*}
This quantity has been studied for several $f$-divergences, e.g., $\mathsf{KL}$-divergence for which $f(t) = t\log(t)$, $\chi^2$-divergence for which $f(t) = (t-1)^2$, and also total variation distance for which $f(t) = \frac{1}{2}|t-1|$. In particular, Dobrushin \cite{Dobrushin} showed that 
\begin{equation}
\label{eq:Dobrushin}
    \eta_\mathsf{TV}(\sK) = \sup_{y_1\neq y_2}\mathsf{TV}(\sK(y_1), \sK(y_2)),
\end{equation}
where $\mathsf{TV}(\mu,\nu)$ denotes the total variation distance between $\mu$ and $\nu$. It is worth noting that \eqref{eq:Dobrushin} has been extensively used in information theory \cite{Yury_Dissipation, Verdu:f_divergence}, statistics \cite{Dobrushin_Maxim} and graph theory \cite{PereBook2006, Moral2003OnCP}.
	
\subsection{Differential Privacy and $\sE_\gamma$-Divergence}

For an arbitrary alphabet $\X$, let $\X^n$ be the set of all datasets of size $n$. By definition, two datasets $\mathbb{D}$ and $\mathbb{D}'$ are \textit{neighboring}, denoted as $\mathbb{D} \sim \mathbb{D}'$, if their Hamming distance is equal to one. Given a randomized mechanism $\M$, we let $P_\mathbb{D}$ be the distribution of $\M(\mathbb{D})$, the output of $\M$ with $\mathbb{D}\in\X^n$ as the input. For $\eps\geq 0$ and $\delta\in [0,1]$, a mechanism $\M$ is said to be $(\eps, \delta)$-differentially private (DP) if
\begin{equation}
\label{eq:DefDP}
    P_{\mathbb{D}}(A)\leq e^\eps P_{\mathbb{D}'}(A) + \delta,
\end{equation}
for every measurable set $A\subset \Y$ and neighboring datasets $\mathbb{D} \sim \mathbb{D}'$. When $\delta=0$, we simply say that $\M$ is $\eps$-DP.

The definition of $(\eps,\delta)$-DP given in \eqref{eq:DefDP} can be reformulated in terms of $\sE_\gamma$-divergence, also known as hockey-stick divergence \cite{Yuri_finite, hockey_stick, Csiszar_Sheilds}. Given $\gamma\geq 1$, $\sE_\gamma$-divergence between two probability distributions $P$ and $Q$ is defined as
\begin{align}
	\sE_\gamma(P\|Q) &\coloneqq  \int_{\mathcal{Y}} [\text{d}(P-\gamma Q)(y)]_+ \label{eq:DefEgamma}\\
	&= \sup_{A\subset \Y} \left[P(A) - \gamma Q(A)\right]\nonumber\\
	&=  P\left(\imath_{P\|Q}>\log\gamma\right)-\gamma Q\left(\imath_{P\|Q}>\log\gamma\right), \label{eq:EgammaPDif}
\end{align}
where $[b]_+ = \max\{0,b\}$ and $\imath_{P\|Q}(t)\coloneqq \log\frac{\text{d}P}{\text{d}Q}(t)$ denotes the \textit{information density} between $P$ and $Q$. The $\sE_\gamma$-divergence is in fact an $f$-divergence associated with $f(t) = (t-\gamma)_+$ and it also satisfies that $\sE_1(P\|Q) = \mathsf{TV}(P,Q)$. The next theorem provides a relation between this divergence and $(\eps,\delta)$-DP.
\begin{figure*}
    \centering
    \includegraphics[height=0.13\textwidth,width=0.9\textwidth]{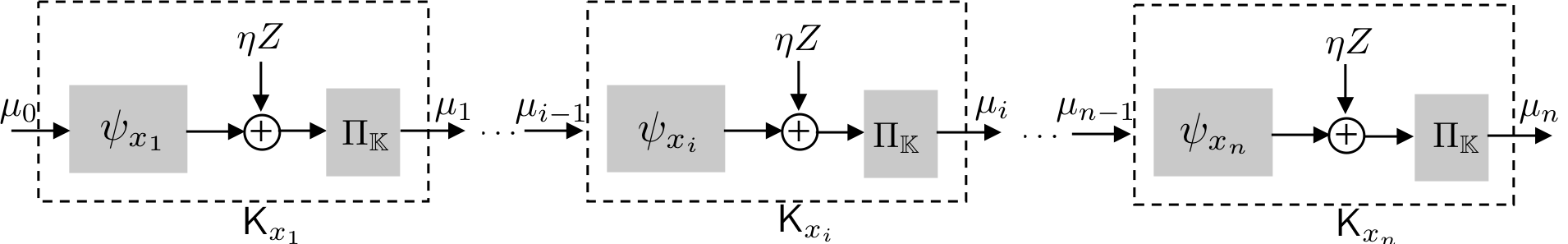}
    \caption{The schematic representation of the projected noisy stochastic gradient descent described algorithm in Algorithm~\ref{alg:PNSGD}. Given $\mu_0$ an arbitrary distribution on $\mathbb K$ and dataset $\mathbb D=\{x_1, \dots, x_n\}$, the $i$-th iteration is encoded by a projected additive kernel $\sK_{x_i}$ given by $y\mapsto \Pi_{\mathbb K}(\psi_{x_i}(y) + \eta Z_i)$ where $\psi_{x_i}(y) = y-\eta\nabla_y\ell(y, x_i)$. The output distribution of kernel $\sK_{x_i}$ is $\mu_i = \mu_0\sK_{x_1}\dots\sK_{x_{i-1}}$.}
    \label{fig:SDPI}
\end{figure*}
\begin{theorem}[\cite{Barthe:2013_Beyond_DP, Improving_Gaussian}]
\label{Thm:Balle}
	A mechanism $\M$ is $(\eps, \delta)$-DP if and only if, for all $\mathbb{D}\sim \mathbb{D}'$,
	\begin{equation*}
	    \mathsf{E}_{e^\eps}(P_{\mathbb{D}}\|P_{\mathbb{D}'})\leq \delta.
    \end{equation*}
\end{theorem}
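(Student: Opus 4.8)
The plan is to observe that, after setting $\gamma = e^\eps$, the two conditions are literally the same family of scalar inequalities indexed by measurable sets. The bridge is the identity recorded in \eqref{eq:DefEgamma}, namely that for any probability measures $P,Q$ and any $\gamma\geq 1$ one has $\mathsf{E}_\gamma(P\|Q) = \sup_{A\subseteq\Y}\,[P(A) - \gamma Q(A)]$; I will take this as given, together with the identification $\mathsf{E}_\gamma(P\|Q) = P(\imath_{P\|Q}>\log\gamma) - \gamma Q(\imath_{P\|Q}>\log\gamma)$ from \eqref{eq:EgammaPDif}, which exhibits an explicit optimal set. Note that the hypothesis $\gamma = e^\eps \geq 1$ is exactly the hypothesis $\eps\geq 0$, so this identity applies.

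For the ``only if'' direction, I would fix a neighboring pair $\mathbb{D}\sim\mathbb{D}'$ and assume the DP inequality \eqref{eq:DefDP} holds for every measurable $A\subseteq\Y$, i.e. $P_{\mathbb{D}}(A) - e^\eps P_{\mathbb{D}'}(A)\leq\delta$ for all such $A$. Taking the supremum over $A$ on the left-hand side and using the set-characterization of $\mathsf{E}_{e^\eps}$ immediately gives $\mathsf{E}_{e^\eps}(P_{\mathbb{D}}\|P_{\mathbb{D}'})\leq\delta$; since $\mathbb{D}\sim\mathbb{D}'$ was arbitrary, this holds for all neighboring pairs.

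For the ``if'' direction, I would run the same argument in reverse: assuming $\mathsf{E}_{e^\eps}(P_{\mathbb{D}}\|P_{\mathbb{D}'})\leq\delta$ for all $\mathbb{D}\sim\mathbb{D}'$, any single measurable set $A$ satisfies $P_{\mathbb{D}}(A) - e^\eps P_{\mathbb{D}'}(A) \leq \sup_{A'}[P_{\mathbb{D}}(A') - e^\eps P_{\mathbb{D}'}(A')] = \mathsf{E}_{e^\eps}(P_{\mathbb{D}}\|P_{\mathbb{D}'}) \leq \delta$, which rearranges to \eqref{eq:DefDP}. Hence $\M$ is $(\eps,\delta)$-DP.

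There is essentially no obstacle: the proof is a direct translation once the supremum representation of $\mathsf{E}_\gamma$ is in hand. The only points deserving a line of care are (i) that the index set in the definition of $\mathsf{E}_\gamma$ --- all measurable subsets of $\Y$ --- coincides with the sets over which the DP definition quantifies, and (ii) that we never actually need the supremum to be attained, although \eqref{eq:EgammaPDif} shows it is, at $A^\star = \{\imath_{P_{\mathbb{D}}\|P_{\mathbb{D}'}} > \eps\}$. If one instead wished to establish \eqref{eq:DefEgamma} from scratch, the real work would be verifying that this $A^\star$ is optimal, but here that fact is imported from the background on $\sE_\gamma$-divergence.
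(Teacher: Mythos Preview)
Your argument is correct and is precisely the standard one-line equivalence via the supremum characterization of $\sE_\gamma$. Note, however, that the paper does not actually supply a proof of this theorem: it is stated with citations to \cite{Barthe:2013_Beyond_DP, Improving_Gaussian} and used as a black box, so there is no in-paper proof to compare against. Your write-up would serve perfectly well as the omitted justification.
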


By relating DP to $\sE_\gamma$-divergence,  this theorem enables us to invoke the SDPI relationship \eqref{SDPI_f}, specialized to $\sE_\gamma$-divergence, to obtain the DP parameters $\eps$ and $\delta$ of iterative processes.  To do so, we first need to compute the contraction coefficient under $\sE_\gamma$-divergence, which is addressed in the next section.

\section{Contraction of $\sE_{\gamma}$-Divergence}
\label{Section:ContractionEgamma}
In this section we establish a closed-form expression for the contraction coefficient of kernels under $\sE_\gamma$-divergence that generalizes the Dobrushin's theorem in \eqref{eq:Dobrushin}. We then instantiate this expression to introduce a family of practically-appealing kernels  with compact input alphabet. For ease of notation, we let $\eta_\gamma(\sK) \coloneqq \eta_{\sE_\gamma}(\sK)$.

\begin{theorem}
\label{Thm:Contraction_EGamma}
For any $\gamma\geq1$, we have 
\begin{equation}
\label{eq:DobrushinEgamma}
    \eta_\gamma(\sK)= \sup_{y_1,y_2\in\mathcal{Y}}\sE_{\gamma}(\sK(y_1)\|\sK(y_2)).
\end{equation}
\end{theorem}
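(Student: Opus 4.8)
The plan is to establish the identity \eqref{eq:DobrushinEgamma} by proving two inequalities. The inequality $\eta_\gamma(\sK) \geq \sup_{y_1,y_2} \sE_\gamma(\sK(y_1)\|\sK(y_2))$ is immediate: for any fixed $y_1, y_2 \in \Y$, take $\mu = \delta_{y_1}$ and $\nu = \delta_{y_2}$ to be point masses. Then $\mu\sK = \sK(y_1)$, $\nu\sK = \sK(y_2)$, and $\sE_\gamma(\mu\|\nu) = \sE_1$-type bookkeeping gives $\sE_\gamma(\delta_{y_1}\|\delta_{y_2}) = 1$ whenever $y_1 \neq y_2$ (since $\gamma \geq 1$ and the measures are mutually singular). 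Hence the ratio $D_f(\mu\sK\|\nu\sK)/D_f(\mu\|\nu)$ equals $\sE_\gamma(\sK(y_1)\|\sK(y_2))$, and taking the supremum over $y_1 \neq y_2$ yields the lower bound. (The case $y_1 = y_2$ contributes a zero numerator and is excluded anyway by the $D_f(\mu\|\nu)\neq 0$ constraint.)

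The substantive direction is the upper bound: for \emph{every} pair $\mu, \nu$ with $\sE_\gamma(\mu\|\nu) \neq 0$, we must show $\sE_\gamma(\mu\sK\|\nu\sK) \leq \sE_\gamma(\mu\|\nu)\cdot \sup_{y_1,y_2}\sE_\gamma(\sK(y_1)\|\sK(y_2))$. The key idea is to exploit the variational form $\sE_\gamma(P\|Q) = \sup_{A} [P(A) - \gamma Q(A)]$ from \eqref{eq:DefEgamma} together with a coupling/joint-distribution argument. First I would fix the optimal set $A^\star$ for $\mu\sK$ versus $\nu\sK$, so that $\sE_\gamma(\mu\sK\|\nu\sK) = (\mu\sK)(A^\star) - \gamma(\nu\sK)(A^\star) = \int \sK(y)(A^\star)\, \mathrm{d}\mu(y) - \gamma \int \sK(y)(A^\star)\, \mathrm{d}\nu(y)$. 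Writing $g(y) \coloneqq \sK(y)(A^\star) \in [0,1]$, this is $\int g\, \mathrm{d}(\mu - \gamma\nu)$. The plan is then to decompose $\mu$ and $\nu$ via their Hahn/Jordan-type decomposition relative to the functional $\mu - \gamma\nu$: on the set where $\mathrm{d}\mu/\mathrm{d}\nu > \gamma$ one pays, elsewhere one does not. More precisely, I would use the representation of $\sE_\gamma(\mu\|\nu)$ as the mass of the positive part of $\mu - \gamma\nu$ and bound $\int g\,\mathrm{d}(\mu - \gamma\nu)$ by splitting the integral and invoking that $g(y_1) - \gamma' g(y_2)$ is controlled by the sup of $\sE_\gamma(\sK(y_1)\|\sK(y_2))$ over all $y_1, y_2$ — this is where a convex-combination / data-processing argument on the kernel values enters. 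An alternative, possibly cleaner route mirrors the classical Dobrushin proof: write $\mu - \gamma\nu$ (or rather reduce to the case where one normalizes the positive and negative parts) and express $\mu\sK - \gamma\nu\sK$ as an average of terms $\sK(y_1) - \gamma\sK(y_2)$ against a coupling of the normalized positive/negative parts of $\mu - \gamma\nu$, then apply convexity of $[\cdot]_+$ and the variational characterization to each term.

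The main obstacle, and the place requiring care, is handling the parameter $\gamma > 1$ in the mixing step: unlike the total-variation case ($\gamma = 1$), the functional $P \mapsto \sE_\gamma(P\|Q)$ does not behave as nicely under the decomposition because $\mu - \gamma\nu$ is not a difference of probability measures of equal mass. One must track the "excess mass" $\sE_\gamma(\mu\|\nu) = \mu(\Y) - \gamma\nu(\Y)$ restricted to the positive Hahn set, and argue that after normalization the relevant quantity is still a genuine average of pairwise $\sE_\gamma$-divergences of kernel outputs. Concretely, if $A_+$ is the Hahn-positive set for $\mu - \gamma\nu$, then $\sE_\gamma(\mu\|\nu) = (\mu - \gamma\nu)(A_+)$, and one would couple $\mu|_{A_+}$ (suitably normalized) with $\nu$ to write $\mu\sK - \gamma\nu\sK$ on $A^\star$ as an integral over pairs $(y_1, y_2)$ of $\sK(y_1)(A^\star) - \gamma\sK(y_2)(A^\star)$ plus a manifestly nonpositive remainder coming from the part of $\mu$ outside $A_+$ paired against itself. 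Bounding each pairwise term by $\sup_{y_1,y_2}\sE_\gamma(\sK(y_1)\|\sK(y_2))$ and integrating against the coupling — whose total mass is exactly $\sE_\gamma(\mu\|\nu)$ — closes the argument. I would also need a brief measurability remark ensuring $y \mapsto \sK(y)(A^\star)$ is measurable so these integrals make sense, which follows from the standard definition of a Markov kernel.
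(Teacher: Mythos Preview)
Your route is genuinely different from the paper's. The paper does not argue the upper bound directly: it imports from Balle et al.\ the constrained inequality
\[
\sup_{\sE_\gamma(\mu\|\nu)\le\delta}\frac{\sE_\gamma(\mu\sK\|\nu\sK)}{\sE_\gamma(\mu\|\nu)}\ \le\ \sup_{y_1\ne y_2}\sE_{\gamma'}(\sK(y_1)\|\sK(y_2)),\qquad \gamma'=1+\frac{\gamma-1}{\delta},
\]
upgrades it to an \emph{equality} via an explicit three-point construction $\mu_\delta,\nu_\delta$, and then sends $\delta\to1$ (so $\gamma'\to\gamma$) using continuity and monotonicity of $\gamma\mapsto\sE_\gamma$. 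Your two-point lower bound $\mu=\delta_{y_1}$, $\nu=\delta_{y_2}$ is correct and considerably simpler than this detour, and your self-contained Hahn-decomposition plan for the upper bound, if completed, avoids both the external citation and the limiting step. The paper's detour does buy something extra, though: along the way it pins down the sharper $\delta$-constrained identity, not just the $\delta=1$ case.

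That said, neither of the two specific couplings you name actually makes the remainder ``manifestly nonpositive''. Coupling the normalized positive and negative parts $\sigma_\pm$ of $\mu-\gamma\nu$ fails because $\sigma_+(\Y)=a:=\sE_\gamma(\mu\|\nu)$ while $\sigma_-(\Y)=a+\gamma-1$, so the integrand cannot be written as $\sK(y_1)-\gamma\sK(y_2)$ against a measure of mass $a$. Coupling normalized $\mu|_{A_+}$ with $\nu$ leaves the residual $(\mu|_{A_-})\sK-(1-\mu(A_+))\gamma\,\nu\sK$, which is \emph{not} nonpositive in general (e.g.\ $\Y=\{0,1,2\}$, $\gamma=1.1$, $\mu=(0.5,0.5,0)$, $\nu=(0,0.5,0.5)$). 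The fix is close to what you are reaching for: since $a\le1$ forces $\sigma_-(\Y)=a+\gamma-1\ge\gamma a$, choose any $0\le\alpha\le\sigma_-/\gamma$ with $\alpha(\Y)=a$ and write $\sigma_-=\gamma\alpha+\beta$ with $\beta\ge0$. Then for every measurable $A$, with $g(y)=\sK(y)(A)$,
\[
(\mu\sK-\gamma\nu\sK)(A)=(\sigma_+\sK-\gamma\,\alpha\sK)(A)-(\beta\sK)(A)\ \le\ a\Bigl(\sup_y g(y)-\gamma\inf_y g(y)\Bigr)\ \le\ a\sup_{y_1,y_2}\sE_\gamma(\sK(y_1)\|\sK(y_2)).
\]
This is precisely the ``coupling of total mass $\sE_\gamma(\mu\|\nu)$'' you were aiming at, but with the second marginal drawn from $\sigma_-/\gamma$ rather than from $\nu$ or from $\sigma_-$ itself.
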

Note that the Dobrushin's theorem \cite{Dobrushin} given in \eqref{eq:Dobrushin} corresponds to the special case of $\gamma = 1$ in Theorem~\ref{Thm:Contraction_EGamma}. This theorem implies that in order to compute the contraction coefficient of a Markov kernel $\sK$ under $\sE_\gamma$-divergence, one needs to compute $\sE_\gamma$-divergence between $\sK(y_1)$ and $\sK(y_2)$ for any $y_1,y_2\in\mathcal{Y}$. The following lemmas are useful for such task. For $m\in\mathbb{R}$ and $v>0$, let $\mathcal{L}(m,v)$ denote the Laplace distribution with mean $m$ and variance $2v^2$.

\begin{lemma}
\label{Lemma:EgammaLaplace}
For $m_1, m_2\in \R$ and $v>0$, we have
\begin{equation}
\label{Exa_Contraction_Laplacian}
    \sE_\gamma(\mathcal{L}(m_1,v)\|\mathcal{L}(m_2,v)) =\left[1- e^{\frac{v\log(\gamma) -|m_1-m_2|}{2v}}\right]_+.
\end{equation}
\end{lemma}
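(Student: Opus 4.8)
The plan is to reduce to a one-parameter problem and then compute $\sE_\gamma$ directly from its integral representation \eqref{eq:DefEgamma}. Since $\sE_\gamma$ is an $f$-divergence, it is invariant under applying a common invertible measurable map to both of its arguments; combining the translation $x\mapsto x-m_1$ with the reflection $x\mapsto -x$ when needed, we may assume without loss of generality that $m_1=0$ and $m_2=\Delta\coloneqq|m_1-m_2|\geq 0$. Write $p(x)=\frac{1}{2v}e^{-|x|/v}$ and $q(x)=\frac{1}{2v}e^{-|x-\Delta|/v}$ for the two densities, so that the log-likelihood ratio is $\imath_{P\|Q}(x)=\frac{1}{v}\big(|x-\Delta|-|x|\big)$, where $P=\mathcal{L}(0,v)$ and $Q=\mathcal{L}(\Delta,v)$.

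Next I would identify the super-level set $A^{*}\coloneqq\{x:\imath_{P\|Q}(x)>\log\gamma\}$, which by \eqref{eq:DefEgamma}--\eqref{eq:EgammaPDif} attains the supremum defining $\sE_\gamma(P\|Q)$. A short piecewise computation of $|x-\Delta|-|x|$ on the three intervals $(-\infty,0]$, $[0,\Delta]$, $[\Delta,\infty)$ shows that $\imath_{P\|Q}$ equals $\Delta/v$ for $x\leq 0$, decreases linearly from $\Delta/v$ to $-\Delta/v$ as $x$ ranges over $[0,\Delta]$, and equals $-\Delta/v$ for $x\geq\Delta$. Hence if $\Delta\leq v\log\gamma$ then $A^{*}$ has measure zero and $\sE_\gamma(P\|Q)=0$, which matches the right-hand side of \eqref{Exa_Contraction_Laplacian} since in that regime $\tfrac{v\log\gamma-\Delta}{2v}\geq 0$. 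Otherwise $A^{*}=(-\infty,x^{*})$ with $x^{*}\coloneqq\tfrac{\Delta-v\log\gamma}{2}\in(0,\Delta)$.

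Finally I would evaluate $\sE_\gamma(P\|Q)=P(A^{*})-\gamma Q(A^{*})$. Since $0<x^{*}<\Delta$, elementary integration of the Laplace densities gives $P(A^{*})=1-\tfrac12 e^{-x^{*}/v}$ and $Q(A^{*})=\tfrac12 e^{(x^{*}-\Delta)/v}$; substituting $x^{*}=\tfrac{\Delta-v\log\gamma}{2}$ and using $\gamma=e^{\log\gamma}$ one checks that both $\tfrac12 e^{-x^{*}/v}$ and $\tfrac{\gamma}{2}e^{(x^{*}-\Delta)/v}$ equal $\tfrac12 e^{(v\log\gamma-\Delta)/(2v)}$, whence $\sE_\gamma(P\|Q)=1-e^{(v\log\gamma-\Delta)/(2v)}$. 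This quantity is positive precisely in the regime $\Delta>v\log\gamma$, so it coincides with $\big[1-e^{(v\log\gamma-\Delta)/(2v)}\big]_{+}$ in all cases. The only point requiring genuine care is the case split at $\Delta=v\log\gamma$, together with the verification that $x^{*}$ lies strictly between $0$ and $\Delta$ so that the integration limits fall on the intended pieces of the two densities; everything else is routine bookkeeping.
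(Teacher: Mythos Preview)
Your proof is correct and follows essentially the same approach as the paper: reduce by translation (and reflection) to a single parameter $\Delta=|m_1-m_2|$, identify the region where the integrand $p-\gamma q$ is positive via the piecewise-linear log-likelihood ratio, and integrate. The only cosmetic difference is that you evaluate $\sE_\gamma$ as $P(A^{*})-\gamma Q(A^{*})$ using the Laplace CDFs, whereas the paper integrates $[p-\gamma q]_{+}$ directly over the half-line; the threshold $x^{*}=(\Delta-v\log\gamma)/2$ in your coordinates corresponds exactly to the paper's lower limit $(\tilde m+v\log\gamma)/2$ after their change of variables.
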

For $m\in\mathbb{R}^d$ and $\sigma>0$, let $\mathcal{N}(m,\sigma^2{\bf I}_d)$ denote the multivariate Gaussian distribution with mean $m$ and covariance matrix $\sigma^2{\bf I}_d$.
\begin{lemma}
\label{Lemma:EgammaGaussian}
For $\mathcal{N}_i = \mathcal{N}(m_i,\sigma^2{\bf I}_d)$, $i=1,2$, we have
\begin{equation*}
    \sE_{\gamma}(\N_1\| \N_2) = \sQ\left(\frac{\log\gamma}{\beta} - \frac{\beta}{2}\right) - \gamma \sQ\left(\frac{\log\gamma}{\beta} + \frac{\beta}{2}\right),
\end{equation*}
where  $\sQ(t) = \frac{1}{\sqrt{2\pi}} \int_t^\infty e^{-u^2/2}\text{d}u$ and $\beta = \frac{\|m_2-m_1\|}{\sigma}$. 
\end{lemma}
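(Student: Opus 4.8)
The plan is to compute $\sE_\gamma(\N_1\|\N_2)$ directly from the identity \eqref{eq:EgammaPDif}, which expresses the divergence through the law of the information density under each of the two measures. Set $P=\N_1$ and $Q=\N_2$. The starting observation is that the information density
\[
\imath_{P\|Q}(y)=\frac{1}{2\sigma^2}\bigl(\|y-m_2\|^2-\|y-m_1\|^2\bigr)=\frac{\langle y,\,m_1-m_2\rangle}{\sigma^2}+\frac{\|m_2\|^2-\|m_1\|^2}{2\sigma^2}
\]
is an \emph{affine} function of $y\in\R^d$. Hence the event $\{\imath_{P\|Q}>\log\gamma\}$ is a half-space, and its probability under a Gaussian law is a one-dimensional Gaussian tail probability; this is the only structural step, and it collapses the $d$-dimensional computation to scalar bookkeeping. (Equivalently, one could start from \eqref{eq:DefEgamma} and integrate $\text{d}(P-\gamma Q)$ over this half-space.)

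Next I would identify the law of $\imath_{P\|Q}$ under each measure. Writing $\beta=\|m_2-m_1\|/\sigma$: under $P$ we have $y\sim\N(m_1,\sigma^2\mathbf{I}_d)$, so $\langle y,\,m_1-m_2\rangle$ is a scalar Gaussian, and a short mean/variance computation (the mean equals $\tfrac12\beta^2$, the variance $\beta^2$) gives $\imath_{P\|Q}\sim\N\bigl(\tfrac12\beta^2,\,\beta^2\bigr)$; under $Q$ we have $y\sim\N(m_2,\sigma^2\mathbf{I}_d)$ and, similarly, $\imath_{P\|Q}\sim\N\bigl(-\tfrac12\beta^2,\,\beta^2\bigr)$. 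Standardizing each,
\[
P\bigl(\imath_{P\|Q}>\log\gamma\bigr)=\sQ\!\Bigl(\tfrac{\log\gamma}{\beta}-\tfrac{\beta}{2}\Bigr),\qquad Q\bigl(\imath_{P\|Q}>\log\gamma\bigr)=\sQ\!\Bigl(\tfrac{\log\gamma}{\beta}+\tfrac{\beta}{2}\Bigr),
\]
and substituting these into \eqref{eq:EgammaPDif} produces exactly the claimed expression.

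Finally I would dispose of the degenerate case $m_1=m_2$, i.e.\ $\beta=0$: there $P=Q$ and $\sE_\gamma(P\|Q)=0$, which agrees with the stated formula read as a limit ($\sQ(\log\gamma/\beta\mp\beta/2)\to0$ as $\beta\downarrow0$ when $\gamma>1$, and for $\gamma=1$ both $\sQ$-terms tend to $\sQ(0)=\tfrac12$). I do not anticipate a genuine obstacle: the argument is a one-line reduction (the log-likelihood ratio is affine, so the relevant events are half-spaces) followed by routine Gaussian computations; the only place demanding care is tracking the $\pm\tfrac12\beta^2$ shifts and the sign in the standardization of $\imath_{P\|Q}$.
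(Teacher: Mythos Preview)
Your proposal is correct and follows essentially the same route as the paper's own proof: compute the affine information density $\imath_{\N_1\|\N_2}$, identify its Gaussian law under each of $\N_1$ and $\N_2$ (means $\pm\tfrac12\beta^2$, variance $\beta^2$), and plug into \eqref{eq:EgammaPDif}. Your treatment is slightly more detailed in that you also handle the degenerate case $\beta=0$, which the paper omits.
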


The previous lemma motivates the following definition.

\begin{definition}\label{Def:theta}
For $\gamma\geq1$, we define $\theta_\gamma:[0,\infty)\to [0,1]$ by
\begin{align}
    \theta_\gamma(r) &\coloneqq \sE_\gamma\left(\mathcal{N}(ru,{\bf I}_d) \| \mathcal{N}(0,{\bf I}_d)\right) \nonumber\\
    &= \sQ\left(\frac{\log\gamma}{r} 
- \frac{r}{2}\right) - \gamma \sQ\left(\frac{\log\gamma}{r} + \frac{r}{2}\right), \label{E_Gamma_Gaussian}
\end{align}
where $u\in\mathbb{R}^d$ is any vector of unit norm.
\end{definition}
With this definition at hand, we can write  
\begin{equation}
\label{theta_Gaussian}
    \sE_{\gamma}(\N(m_1, \sigma^2\mathbf{I}_d)\| \N(m_2, \sigma^2\mathbf{I}_d)) = \theta_\gamma\left(\frac{\|m_2 - m_1\|}{\sigma}\right).
\end{equation}
It is worth pointing out that $\theta_\gamma$ has a similar role as the function $R_\alpha$ introduced by Feldman {\it et al.} in \cite{Feldman2018PrivacyAB}.

The additive Gaussian kernel $\sK:\mathbb{R}^d\to\mathbb{R}^d$ is the kernel determined by $\sK(y) = \mathcal{N}(y,\sigma^2 {\bf I}_d)$ for some $\sigma>0$. This kernel models the privacy mechanisms which map $y \mapsto y + Z$ with $Z \sim \mathcal{N}({\bf 0},\sigma^2{\bf I}_d)$. An application of Theorem~\ref{Thm:Contraction_EGamma} and Lemma~\ref{Lemma:EgammaGaussian} shows that, under $\sE_\gamma$-divergence, the contraction coefficient of the additive Gaussian kernel is trivial\footnote{This is not surprising given the facts that $\eta_{\mathsf{TV}}(\sK) = 1$ for any Gaussian channels $\sK$ without input constraints \cite{Yury_Dissipation} and $\eta_{\mathsf{TV}}(\sK)=1$ if and only if $\eta_{f}(\sK)=1$ for all non-linear functions $f$ \cite{cohen1998comparisons}.}, i.e., $\eta_\gamma(\sK) = 1$. A similar conclusion holds for the additive Laplace kernel\footnote{The Euclidean norm of a $d$-dimensional Laplace noise vector is of order $d\log d$, see, e.g., \cite[Thm.~2]{SGD_Exponential_VS_Gaussian}. This asymptotic behavior makes Laplacian noise vectors highly inefficient for privacy purposes in the high dimensional setting. Therefore, in this paper we focus on the 1-dimensional case.} which is determined by $\sK(y) = \mathcal{L}(y,v)$ for some $v>0$.

Fortunately, the input and output of kernels appearing in applications tend to be bounded. Think, for example, of the kernel which models the update of the weights of a neural networks during its training. In this case, the weights are bounded either by design or by regularization mechanisms. Motivated by this observation, we say that a kernel $\sK:\mathbb{K}\to\mathbb{K}$ is the projected additive Gaussian kernel if it models the mechanism which maps $y \mapsto \Pi_\mathbb{K}(y+Z)$ where $\mathbb{K} \subset \mathbb{R}^d$ is compact and convex, $\Pi_{\mathbb{K}}$ is the projection operator onto $\mathbb{K}$ and $Z \sim \mathcal{N}(0,\sigma^2{\bf I}_d)$ for some $\sigma>0$. Similarly, we say that a kernel $\sK:\mathbb{K}\to\mathbb{K}$ is the projected additive Laplace kernel if it models the mechanism which maps $y \mapsto \Pi_\mathbb{K}(y+L)$ where $L \sim \mathcal{L}(0,v)$ for some $v>0$. These construction of kernels  will be instrumental in the analysis of privacy guarantee of iterative processes in the next section.

\section{Analysis of Iterative Mechanisms via Contraction Coefficients}

In this section, we consider iterative processes that can be decomposed into projected additive kernels. This constraint allows us to analyze the evolution of such processes through the lens of contraction coefficients. 

\subsection{General Setting}

Recall the stochastic optimization setting in Example\ \ref{Example:StochasticOptimization}, where $\mathcal{Y}\subset\mathbb{R}^d$ is a parameter space and $\mathbb{D} = \{x_1,\ldots,x_n\}$ is a dataset. In this context, an iterative stochastic optimization method is fully characterized by a set of kernels $\{\sK_x : x\in\mathcal{X}\}$ with $\sK_x:\mathcal{Y}\to\mathcal{P}(\mathcal{Y})$. The following lemma provides an upper bound for the $f$-divergence between the parameters returned when using two neighboring datasets.

\begin{lemma}
\label{Lemma:DPIIterative}
Let $\mu_0\in\mathcal{P}(\mathcal{Y})$ and $\{\sK_x : x\in\mathcal{X}\}$ be a family of kernels over $\mathcal{Y}$. If $\mathbb{D} = \{x_1,\ldots,x_n\}$ and $\mathbb{D}' = \{x'_1,\ldots,x'_n\}$ are neighbouring datasets with $x_i \neq x'_i$ for some $i\in[n]$, then
\begin{align*}
      D_f(\mu_{0}\sK_{x_1}\cdots \sK_{x_n}\|\mu_{0}\sK_{x'_1}\cdots \sK_{x_n'})\leq D_f(\mu_{i-1}\sK_{x_i}\|\mu_{i-1}\sK_{x'_i})\prod_{t=i+1}^n\eta_f(\sK_{x_t}),
\end{align*}
where $\mu_{i-1} \coloneqq \mu_0 \sK_{x_1} \cdots \sK_{x_{i-1}}$.
\end{lemma}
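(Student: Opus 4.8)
The plan is to exploit the observation that two neighbouring datasets differing only in their $i$-th coordinate induce iterative processes that coincide for the first $i-1$ steps and apply \emph{identical} kernels after step $i$; the claim then follows from iterating the defining inequality of the contraction coefficient (the one-step form of \eqref{SDPI_f}).

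First I would use the neighbouring hypothesis: since $\mathbb{D}\sim\mathbb{D}'$ and $x_i\neq x'_i$, the Hamming distance being one forces $x_t=x'_t$ for every $t\neq i$. Hence $\mu_0\sK_{x_1}\cdots\sK_{x_{i-1}}=\mu_0\sK_{x'_1}\cdots\sK_{x'_{i-1}}=\mu_{i-1}$, and both compositions can be rewritten starting from this common distribution:
\[
\mu_0\sK_{x_1}\cdots\sK_{x_n}=\mu_{i-1}\sK_{x_i}\sK_{x_{i+1}}\cdots\sK_{x_n},\qquad
\mu_0\sK_{x'_1}\cdots\sK_{x'_n}=\mu_{i-1}\sK_{x'_i}\sK_{x_{i+1}}\cdots\sK_{x_n},
\]
where the second identity also uses $x'_t=x_t$ for $t>i$.

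Next, set $\nu\coloneqq\mu_{i-1}\sK_{x_i}$ and $\nu'\coloneqq\mu_{i-1}\sK_{x'_i}$. From the definition of $\eta_f$ one has $D_f(\alpha\sK\|\beta\sK)\le\eta_f(\sK)\,D_f(\alpha\|\beta)$ for \emph{all} $\alpha,\beta\in\P(\Y)$. Applying this bound successively to the kernels $\sK_{x_n},\sK_{x_{n-1}},\dots,\sK_{x_{i+1}}$ --- i.e. an induction on the number of trailing kernels, peeling off one at a time from the outside --- yields
\[
D_f\!\big(\nu\,\sK_{x_{i+1}}\cdots\sK_{x_n}\,\big\|\,\nu'\,\sK_{x_{i+1}}\cdots\sK_{x_n}\big)\le D_f(\nu\|\nu')\prod_{t=i+1}^n\eta_f(\sK_{x_t}),
\]
which, upon substituting back $\nu=\mu_{i-1}\sK_{x_i}$ and $\nu'=\mu_{i-1}\sK_{x'_i}$, is exactly the asserted inequality (the empty product when $i=n$ giving the trivial base case).

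There is no substantial obstacle here; the only point deserving a line of care is the degenerate case $D_f(\nu\|\nu')=0$, for which the one-step contraction inequality cannot be read off directly from the supremum defining $\eta_f$ (which excludes such pairs) and must instead be obtained from the ordinary data-processing inequality $D_f(\alpha\sK\|\beta\sK)\le D_f(\alpha\|\beta)$, equivalently $\eta_f(\sK)\le 1$. The remaining induction is routine.
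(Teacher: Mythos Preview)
Your proposal is correct and follows essentially the same approach as the paper's proof: use the neighbouring hypothesis to identify the first $i-1$ kernels on both sides, then iterate the strong data processing inequality over the remaining kernels $\sK_{x_{i+1}},\dots,\sK_{x_n}$. Your extra remark handling the degenerate case $D_f(\nu\|\nu')=0$ is a nice touch that the paper omits.
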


While the previous lemma follows from a routine application of the strong data processing inequality, it provides a natural framework to study the privacy guarantees of iterative optimization methods. In the following, we use it to study the privacy properties of the projected noisy stochastic gradient descent algorithm and some of its variations.

\subsection{Projected Noisy Stochastic Gradient Descent}
\label{Sec:PNSGD}


We now apply Lemma~\ref{Lemma:DPIIterative} to study the projected noisy stochastic gradient descent (PNSGD) algorithm under two different noise densities: Laplacian and Gaussian.


Assume that $\mathbb{K}$ is a compact and convex subset of $\mathbb{R}^d$ and that $\ell:\mathbb{K}\times\X\to\R_+$ is a loss function differentiable in its first argument. 
In the literature it is customary to assume regularity conditions on the loss function \cite{Feldman2018PrivacyAB, Balle2019mixing}. We make the following assumptions on the loss function: 
\begin{itemize}
	\item $y\mapsto \ell(y, x)$ is $L$-Lipschitz for all $x\in \X$,
	\item $y\mapsto \nabla_y \ell(y, x)$ is $\beta$-Lipschitz for all $x\in \X$,
	\item $y\mapsto \ell(y, x)$ is $\rho$-strongly convex for all $x\in \X$.
\end{itemize}
For a given a dataset $\mathbb{D} = \{x_1,\ldots,x_n\}$, the  PNSGD algorithm starts from a given point $Y_0\sim \mu_0\in \P(\mathbb K)$ (an arbitrary initial distribution) and then updates it according to the rule
\begin{equation}
\label{SGD_Update_Rulde}
    Y_{t+1} = \Pi_{\mathbb K}(Y_t - \eta[\nabla_y\ell(Y_t, x_{t+1})+Z_{t+1}]),
\end{equation} 
where $\Pi_{\mathbb K}:\R^d\to \mathbb{K}$ is the projection operator onto $\mathbb{K}$, $\eta>0$ is the learning rate and $\{Z_t\}_{t=1}^n$ is a collection of i.i.d.\ noise variables sampled from a distribution absolutely continuous w.r.t.\ the Lebesgue measure.  The PNSGD algorithm is summarized in Algorithm~\ref{alg:PNSGD}.

\begin{algorithm}
\caption{PNSGD Algorithm}
\label{alg:PNSGD}
\begin{algorithmic}
\small
\REQUIRE{Dataset $\mathbb{D} = \{x_1,\ldots,x_n\}$, learning rate $\eta>0$, initial point $Y_0 \sim \mu_0 \in \mathcal{P}(\mathbb{K})$ and i.i.d.\ copies $\{Z_t\}_{t=1}^n$ of a r.v.\ $Z$}
\FOR{$t\in\{0,\ldots,n-1\}$}
\STATE{$Y_{t+1} = \Pi_{\mathbb{K}}(Y_t - \eta [\nabla_y \ell(Y_t,x_{t+1}) + Z_{t+1}])$}
\ENDFOR
\RETURN{$Y_n$} 
\end{algorithmic}
\end{algorithm}

For any $x\in \X$, let  $\psi_{x}(y) \coloneqq y - \eta \nabla_y \ell(y,x)$. Notice that $y\mapsto \Pi_{\mathbb K}(\psi_{x}(y)+ \eta Z)$ is encoded by the projected additive Laplacian (resp., Gaussian) kernel if $Z$ is Laplacian (resp., Gaussian) noise variable. Given the dataset $\mathbb D=\{x_1, \dots, x_n\}$, one can therefore view the $i$-th iteration of the PNSGD algorithm  as a projected kernel   $\sK_{x_i}:\mathbb{K}\to\mathcal{P}(\mathbb{K})$ that models the mapping
\begin{equation*}
    y \mapsto \Pi_{\mathbb K}(\psi_{x_i}(y)+ \eta Z),
\end{equation*}
where $Z$ is the common distribution of $\{Z_t\}_{t=1}^n$. If $Y_1,\ldots,Y_n$ are produced by the PNSGD algorithm with $Y_0 \sim \mu_0$, then, for all $t\in[n]$, we have
\begin{equation*}
    Y_t \sim \mu_t = \mu_0 \sK_{x_1} \cdots \sK_{x_t}.
\end{equation*}
This allows us to express the PNSGD algorithm as a concatenation of $n$ channels, as illustrated in Fig.~\ref{fig:SDPI}.

Before delving into the privacy analysis of PNSGD, it is important to pause and adapt the definition of differential privacy to PNSGD setting. We recall from \cite{Feldman2018PrivacyAB} that a mechanism $\M$ is $(\eps, \delta)$-DP for its $i$th input if $\sEs(P_{\mathbb D}\|P_{\mathbb D'})\leq \delta$ for any pair of datasets $\mathbb D$ and $\mathbb D'$ differing on the $i$-th coordinate. Specializing Lemma~\ref{Lemma:DPIIterative} to $\sE_\gamma$-divergence, we can say that the PNSGD algorithm is $(\eps, \delta)$-DP for its $i$-th input if 
\begin{align}\label{Eq:EGamma_PNSGD}
     \sE_{e^\eps}(\mu_{0}\sK_{x_1}\cdots \sK_{x_n}\|\mu_{0}\sK_{x'_1}\cdots \sK_{x_n'})\leq \sE_{e^\eps}(\zeta_{x_i}\|\omega_{x_{i}})\prod_{t=i+1}^n\eta_{e^\eps}(\sK_{x_t}),
\end{align}
where $\zeta_{x_i}\coloneqq \mu_{i-1}\sK_{x_i}$ and $\omega_{x_i}\coloneqq \mu_{i-1}\sK_{x'_i}$.
Assuming $Z$ is either Laplacian or Gaussian, we can compute $\eta_{e^\eps}(\sK_{x_t})$.

\subsection{Laplacian Projected Noisy Stochastic Gradient Descent}
Here, we consider the PNSGD algorithm with Laplacian noise; i.e., $Z\sim \L(0, v)$ for some $v>0$. The following theorem establishes the $\eps$-DP property of such algorithm. For $L, \beta,$ and $\rho$ given in Section~\ref{Sec:PNSGD}, define
\begin{equation}\label{Def:M}
    M\coloneqq \sqrt{1-\frac{2\eta\beta\rho}{\beta+\rho}.}
\end{equation}

\begin{theorem}
\label{Thm:Laplacian_DP}
Assume that $\mathbb{K} = [a,b]$ for some $a<b$. Then PNSGD algorithm with Laplace noise is $(\eps,\delta)$-DP for its $i$-th input where $\eps\geq 0$ and 
$$\delta = \left(1-e^{\frac{\eps}{2} - \frac{L}{v}}\right)_+\left(1-e^{\frac{\eps}{2} - \frac{M(b-a)}{2\eta v}}\right)_+^{n-i}.$$
Consequently, we have $\delta =0$ if $\eps \geq  \min\{\frac{2L}{v},\frac{M(b-a)}{\eta v}\}$.
\end{theorem}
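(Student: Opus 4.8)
The plan is to upper-bound the hockey-stick divergence $\sE_{e^\eps}(P_{\mathbb D}\|P_{\mathbb D'})$ for neighbouring datasets $\mathbb D \sim \mathbb D'$ differing only in coordinate $i$; by Theorem~\ref{Thm:Balle}, any such bound is a valid value of $\delta$. Feeding these datasets into \eqref{Eq:EGamma_PNSGD} (the $\sE_{e^\eps}$-specialization of Lemma~\ref{Lemma:DPIIterative}), it suffices to control two quantities: (a) each contraction coefficient $\eta_{e^\eps}(\sK_{x_t})$ for $t = i+1,\dots,n$, and (b) the one-step divergence $\sE_{e^\eps}(\zeta_{x_i}\|\omega_{x_i})$ between $\mu_{i-1}\sK_{x_i}$ and $\mu_{i-1}\sK_{x'_i}$.

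The geometric ingredient behind both bounds is that the gradient step $\psi_x(y) = y - \eta\nabla_y\ell(y,x)$ is $M$-Lipschitz on $\R^d$. First I would invoke the co-coercivity inequality for the $\rho$-strongly convex and $\beta$-smooth map $y \mapsto \ell(y,x)$, expand $\|\psi_x(y_1)-\psi_x(y_2)\|^2 = \|y_1-y_2\|^2 - 2\eta\langle\nabla_y\ell(y_1,x)-\nabla_y\ell(y_2,x),\,y_1-y_2\rangle + \eta^2\|\nabla_y\ell(y_1,x)-\nabla_y\ell(y_2,x)\|^2$, and note that for a step size with $\eta \le \tfrac{2}{\rho+\beta}$ the gradient-difference term acquires a nonpositive coefficient, leaving $\|\psi_x(y_1)-\psi_x(y_2)\|^2 \le (1-\tfrac{2\eta\rho\beta}{\rho+\beta})\|y_1-y_2\|^2 = M^2\|y_1-y_2\|^2$. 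For (a): by Theorem~\ref{Thm:Contraction_EGamma}, $\eta_{e^\eps}(\sK_{x_t}) = \sup_{y_1,y_2\in[a,b]}\sE_{e^\eps}(\sK_{x_t}(y_1)\|\sK_{x_t}(y_2))$; since $\sK_{x_t}(y)$ is the law of $\Pi_{\mathbb K}(\psi_{x_t}(y)+\eta Z)$ with $\eta Z \sim \mathcal{L}(0,\eta v)$ and $\Pi_{\mathbb K}$ deterministic, the data-processing inequality for $f$-divergences removes the projection, and Lemma~\ref{Lemma:EgammaLaplace} (scale parameter $\eta v$, $\gamma = e^\eps$) gives
\begin{align*}
\sE_{e^\eps}(\sK_{x_t}(y_1)\|\sK_{x_t}(y_2)) &\le \Bigl[1 - e^{\frac{\eps}{2} - \frac{|\psi_{x_t}(y_1)-\psi_{x_t}(y_2)|}{2\eta v}}\Bigr]_+ \\
&\le \Bigl[1 - e^{\frac{\eps}{2} - \frac{M(b-a)}{2\eta v}}\Bigr]_+,
\end{align*}
using $|\psi_{x_t}(y_1)-\psi_{x_t}(y_2)| \le M(b-a)$ and the monotonicity of the right-hand side in that distance.

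For (b), joint convexity of the $f$-divergence $\sE_{e^\eps}$ yields $\sE_{e^\eps}(\mu_{i-1}\sK_{x_i}\|\mu_{i-1}\sK_{x'_i}) \le \int \mu_{i-1}(\mathrm{d}y)\,\sE_{e^\eps}(\sK_{x_i}(y)\|\sK_{x'_i}(y))$; for a fixed $y$ the two laws differ only in the mean of the Laplace noise, with $\psi_{x_i}(y)-\psi_{x'_i}(y) = \eta(\nabla_y\ell(y,x'_i)-\nabla_y\ell(y,x_i))$ of norm at most $2\eta L$ by the $L$-Lipschitz hypothesis (so $\|\nabla_y\ell(y,\cdot)\|\le L$). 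Removing the projection by the data-processing inequality and applying Lemma~\ref{Lemma:EgammaLaplace} once more bounds the integrand by $\bigl[1 - e^{\eps/2 - L/v}\bigr]_+$ uniformly in $y$, hence $\sE_{e^\eps}(\zeta_{x_i}\|\omega_{x_i}) \le \bigl[1 - e^{\eps/2 - L/v}\bigr]_+$. Substituting the bounds from (a) and (b) into \eqref{Eq:EGamma_PNSGD} produces exactly the claimed $\delta$; and since $\bigl[1-e^{\eps/2-L/v}\bigr]_+ = 0$ iff $\eps \ge 2L/v$ while $\bigl[1-e^{\eps/2-M(b-a)/(2\eta v)}\bigr]_+ = 0$ iff $\eps \ge M(b-a)/(\eta v)$, the product vanishes once $\eps \ge \min\{2L/v,\, M(b-a)/(\eta v)\}$.

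The main obstacle is the sharp $M$-Lipschitz estimate on $\psi_x$: obtaining the constant $M = \sqrt{1-2\eta\beta\rho/(\beta+\rho)}$ rather than the cruder $\max\{|1-\eta\rho|,|1-\eta\beta|\}$ requires the full two-term co-coercivity inequality together with the step-size restriction $\eta\le 2/(\rho+\beta)$ (which also guarantees $M$ is real), so I would either treat that restriction as a standing assumption on the learning rate or state it explicitly in the theorem. The remaining steps — joint convexity, data processing through the deterministic projection $\Pi_{\mathbb K}$, and the two applications of Lemma~\ref{Lemma:EgammaLaplace} — are routine once the contraction estimate is in hand.
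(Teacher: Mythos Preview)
Your proposal is correct and follows essentially the same route as the paper's proof: reduce to \eqref{Eq:EGamma_PNSGD}, bound each $\eta_{e^\eps}(\sK_{x_t})$ via Theorem~\ref{Thm:Contraction_EGamma}, the data-processing inequality (to drop $\Pi_{\mathbb K}$), Lemma~\ref{Lemma:EgammaLaplace}, and the $M$-Lipschitz property of $\psi_x$; and bound $\sE_{e^\eps}(\zeta_{x_i}\|\omega_{x_i})$ via joint convexity, data processing, Lemma~\ref{Lemma:EgammaLaplace}, and $\|\psi_{x_i}(y)-\psi_{x'_i}(y)\|\le 2\eta L$. The only cosmetic difference is that you sketch the co-coercivity argument for the $M$-Lipschitz constant inline, whereas the paper isolates it as a cited lemma (with the same step-size restriction $\eta<2/(\beta+\rho)$ you flag).
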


The use of Laplacian noise to provide $\eps$-DP for SGD algorithms was extensively studied, see e.g., \cite{Bassily_ERM, SGD_Exponential_VS_Gaussian, chaudhuri2011differentially}. 
Unlike previous results, Theorem~\ref{Thm:Laplacian_DP} is the first result regarding the privacy guarantees of PNSGD with Laplacian noise in the distributed-oriented framework proposed by Feldman et al.\ \cite{Feldman2018PrivacyAB}.  It is worth pointing out that our approach seems conceptually simpler than the approaches employed in  \cite{Feldman2018PrivacyAB, Balle2019mixing}. 

\subsection{Gaussian Projected Noisy Stochastic Gradient Descent}
Next, we assume that the noise distribution in the PNSGD algorithm is Gaussian, i.e., $Z\sim \N(0, \sigma^2\mathbf{I}_d)$. 

\begin{figure}
\hfill
\subfigure{\includegraphics[height =4cm, width=5.7cm]{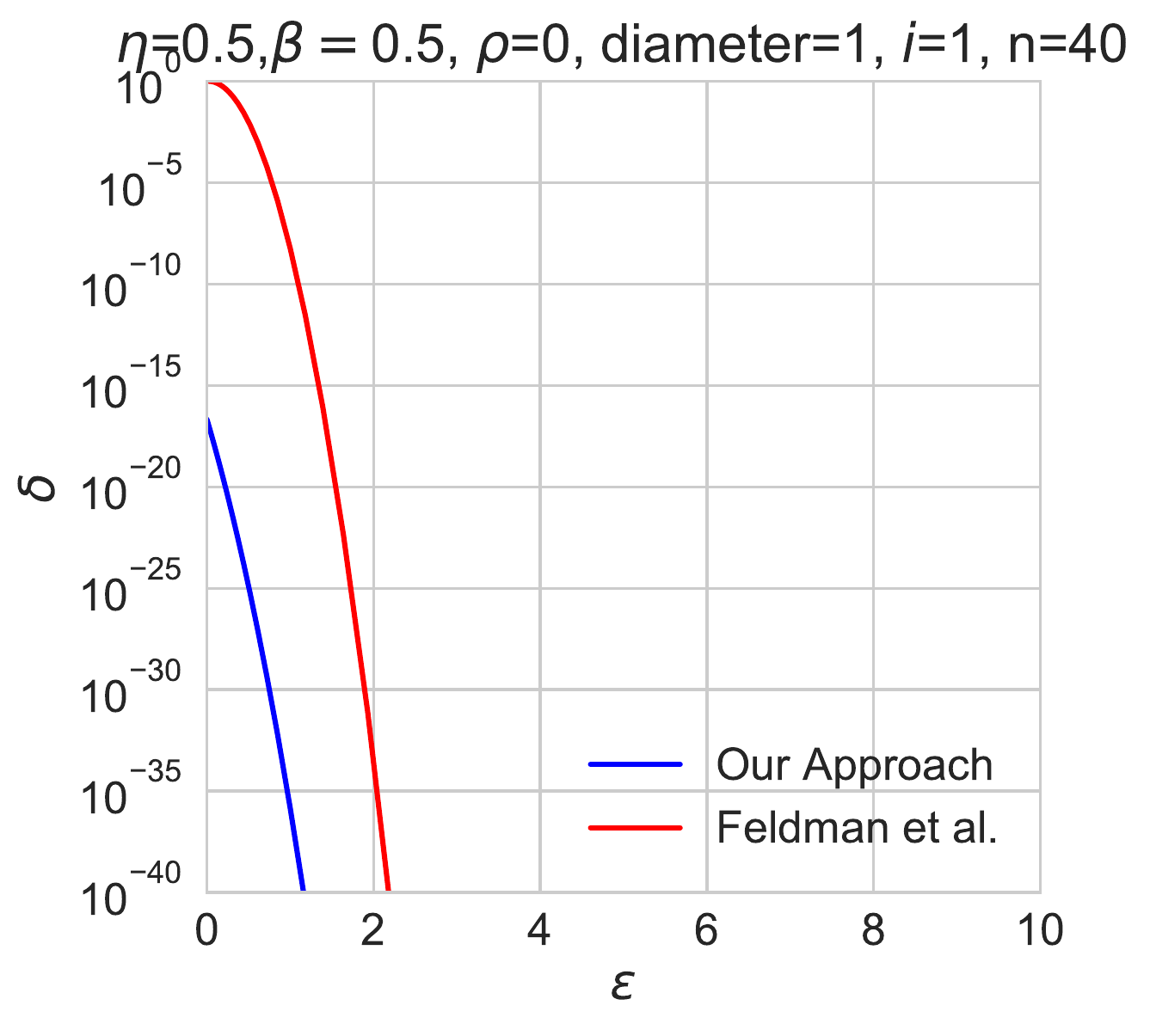}}
\hfill
\subfigure{\includegraphics[height =4cm,width=5.7cm]{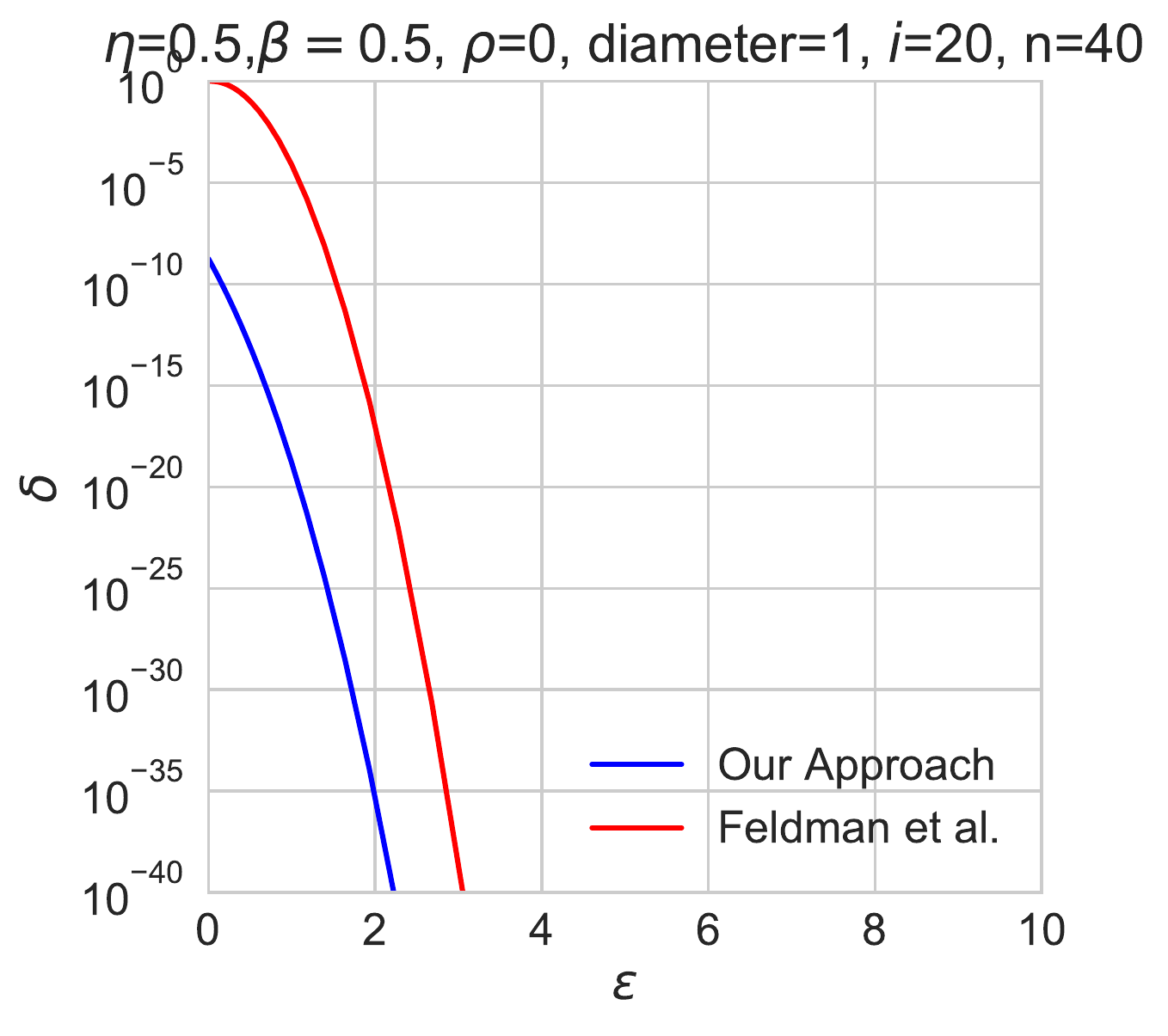}}
\hfill
\subfigure{\includegraphics[height =4cm,width=5.7cm]{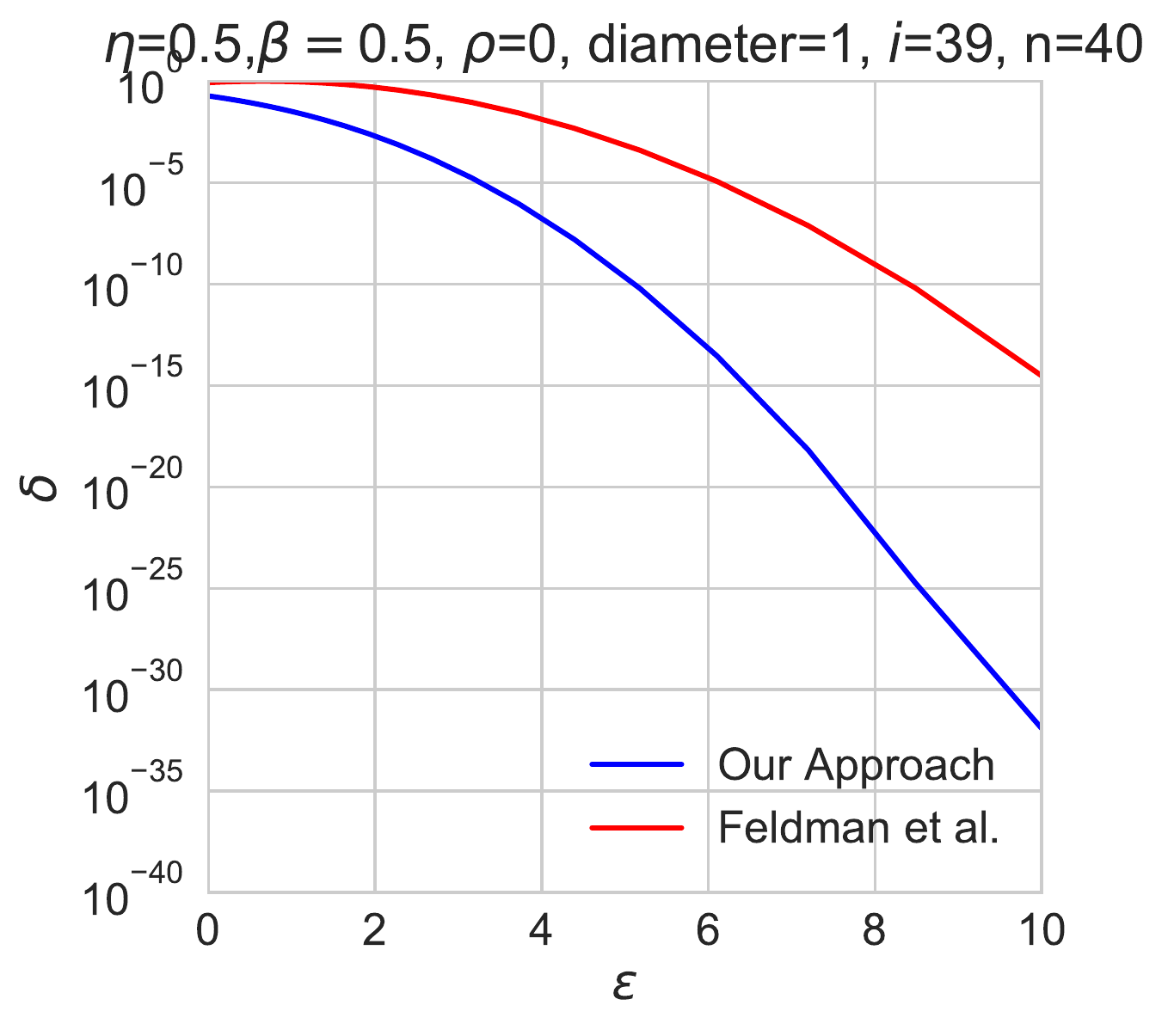}}
\hfill
\caption{The privacy parameters $\eps$ and $\delta$ of PNSGD with Gaussian noise having $\sigma =2$ and loss function with parameter $L=1$ and $\beta = 0.5$, computed using both Theorem~\ref{Thm:Gaussian_DP} and Balle et al. \cite[Theorem 5]{Balle2019mixing} for $i=1$, $i=20$, and $i=39$ in dataset of size $n=40$. Other parameters are as follows: $\eta =0.5, \rho=0$, and $D_\mathbb{K}=1$.}
\label{Fig:Comparison_Individual}
\end{figure}

\begin{theorem}
\label{Thm:Gaussian_DP}
Let $\mathbb{K} \subset \mathbb{R}^d$ be a compact and convex set. The PNSGD algorithm with Gaussian noise is $(\eps,\delta)$-DP for its $i$-th input where $\eps\geq 0$ and 
$$\delta = \theta_{e^\eps}\left(\frac{2 L}{\sigma}\right)\theta_{e^\eps}\left(\frac{MD_\mathbb{K}}{\eta \sigma}\right)^{n-i}.$$
\end{theorem}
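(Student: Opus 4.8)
The plan is to plug the specialized chain rule \eqref{Eq:EGamma_PNSGD} (with $\gamma=e^{\eps}$) into the $(\eps,\delta)$-DP criterion of Theorem~\ref{Thm:Balle} and then bound its two right-hand factors separately. Fix neighbouring datasets $\mathbb D\sim\mathbb D'$ differing only in coordinate $i$, so that $x_t=x'_t$ for $t\neq i$ and $\mu_0\sK_{x_1}\cdots\sK_{x_{i-1}}=\mu_0\sK_{x'_1}\cdots\sK_{x'_{i-1}}=\mu_{i-1}$. By the definition of $(\eps,\delta)$-DP for the $i$-th input (recalled in Section~\ref{Sec:PNSGD}) together with Theorem~\ref{Thm:Balle}, it suffices to show $\sE_{e^\eps}(\mu_0\sK_{x_1}\cdots\sK_{x_n}\,\|\,\mu_0\sK_{x'_1}\cdots\sK_{x'_n})\le\delta$, and \eqref{Eq:EGamma_PNSGD} reduces this to the two estimates
\[
\prod_{t=i+1}^{n}\eta_{e^\eps}(\sK_{x_t})\ \le\ \theta_{e^\eps}\Big(\frac{MD_{\mathbb K}}{\eta\sigma}\Big)^{n-i},\qquad\quad \sE_{e^\eps}(\mu_{i-1}\sK_{x_i}\,\|\,\mu_{i-1}\sK_{x'_i})\ \le\ \theta_{e^\eps}\Big(\frac{2L}{\sigma}\Big).
\]

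For the first estimate I would use Theorem~\ref{Thm:Contraction_EGamma} to write $\eta_{e^\eps}(\sK_{x_t})=\sup_{y_1,y_2\in\mathbb K}\sE_{e^\eps}(\sK_{x_t}(y_1)\,\|\,\sK_{x_t}(y_2))$. Since $\sK_{x_t}(y)$ is the image of $\N(\psi_{x_t}(y),\eta^2\sigma^2{\bf I}_d)$ under the deterministic map $\Pi_{\mathbb K}$, the data-processing inequality for the $f$-divergence $\sE_\gamma$ together with \eqref{theta_Gaussian} gives $\sE_{e^\eps}(\sK_{x_t}(y_1)\,\|\,\sK_{x_t}(y_2))\le\theta_{e^\eps}\big(\|\psi_{x_t}(y_1)-\psi_{x_t}(y_2)\|/(\eta\sigma)\big)$. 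The standard co-coercivity estimate for a gradient step on a $\rho$-strongly convex, $\beta$-smooth function (valid under the step-size condition $\eta\le 2/(\beta+\rho)$) shows that $\psi_{x_t}$ is $M$-Lipschitz with $M$ as in \eqref{Def:M}, so $\|\psi_{x_t}(y_1)-\psi_{x_t}(y_2)\|\le M\|y_1-y_2\|\le MD_{\mathbb K}$ whenever $y_1,y_2\in\mathbb K$. Because $r\mapsto\theta_{e^\eps}(r)$ is non-decreasing — differentiating the closed form in Definition~\ref{Def:theta} gives $\theta_\gamma'(r)=\frac{\gamma}{\sqrt{2\pi}}\,e^{-\frac{1}{2}(\frac{\log\gamma}{r}+\frac{r}{2})^2}>0$ — this yields $\eta_{e^\eps}(\sK_{x_t})\le\theta_{e^\eps}(MD_{\mathbb K}/(\eta\sigma))$ for every $t$, and multiplying over $t=i+1,\dots,n$ gives the first estimate.

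For the second estimate, $\mu_{i-1}\sK_{x_i}$ and $\mu_{i-1}\sK_{x'_i}$ are mixtures with the \emph{same} mixing measure $\mu_{i-1}$; hence joint convexity of $\sE_\gamma$ (immediate from $\sE_\gamma(P\|Q)=\sup_{A}[P(A)-\gamma Q(A)]$, since the supremum of an integral is at most the integral of the supremum) gives $\sE_{e^\eps}(\mu_{i-1}\sK_{x_i}\,\|\,\mu_{i-1}\sK_{x'_i})\le\sup_{y\in\mathbb K}\sE_{e^\eps}(\sK_{x_i}(y)\,\|\,\sK_{x'_i}(y))$. As before, data processing through $\Pi_{\mathbb K}$ and \eqref{theta_Gaussian} bound the right-hand side by $\sup_{y\in\mathbb K}\theta_{e^\eps}\big(\|\psi_{x_i}(y)-\psi_{x'_i}(y)\|/(\eta\sigma)\big)$; since $\psi_{x_i}(y)-\psi_{x'_i}(y)=\eta\big(\nabla_y\ell(y,x'_i)-\nabla_y\ell(y,x_i)\big)$ and each gradient has norm at most $L$ by $L$-Lipschitzness of the loss, we get $\|\psi_{x_i}(y)-\psi_{x'_i}(y)\|\le 2\eta L$, and monotonicity of $\theta_{e^\eps}$ finishes the bound. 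Multiplying the two estimates reproduces exactly the stated $\delta$.

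The hard part is essentially the single optimization ingredient — the $M$-Lipschitz contraction of the gradient-descent map $\psi_x$, which is where all three regularity assumptions on $\ell$ are actually used; everything else is a bookkeeping combination of the $\sE_\gamma$ chain rule \eqref{Eq:EGamma_PNSGD}, the Dobrushin-type identity of Theorem~\ref{Thm:Contraction_EGamma}, the data-processing and convexity properties of $\sE_\gamma$, and the Gaussian computation of Lemma~\ref{Lemma:EgammaGaussian}. I expect the only conceptually subtle point to be the second estimate, where one must avoid arguing by data processing (the kernels $\sK_{x_i}$ and $\sK_{x'_i}$ are different, so the DPI does not apply directly) and instead use joint convexity to pass to the worst-case starting point $y$.
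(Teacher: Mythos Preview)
Your proposal is correct and follows essentially the same route as the paper's own proof: reduce to \eqref{Eq:EGamma_PNSGD}, bound each $\eta_{e^\eps}(\sK_{x_t})$ via Theorem~\ref{Thm:Contraction_EGamma}, data processing through $\Pi_{\mathbb K}$, the Gaussian formula \eqref{theta_Gaussian}, the $M$-Lipschitz property of $\psi_{x_t}$ (the paper records this as a separate lemma), and monotonicity of $\theta_\gamma$; and bound $\sE_{e^\eps}(\mu_{i-1}\sK_{x_i}\|\mu_{i-1}\sK_{x'_i})$ via joint convexity, data processing, and the $2\eta L$ gradient-difference bound. The only cosmetic difference is that the paper keeps the integral $\int\sE_{e^\eps}(\sK_{x_i}(a)\|\sK_{x'_i}(a))\,\mu_{i-1}(\mathrm{d}a)$ from Jensen rather than immediately passing to the supremum over $y\in\mathbb K$, but both lead to the identical final bound.
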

Compared to Laplacian, the Gaussian perturbation has a better utility in high-dimensional setting, as illustrated in  \cite{SGD_Exponential_VS_Gaussian}. Hence, it has extensively appeared in DP literature as a \textit{de facto} mechanism for providing privacy  guarantees in training deep learning models \cite{Abadi_MomentAccountant}. Gaussian distribution is, in particular, appealing in the case of RDP as the R\'enyi divergence between two Gaussian distributions has a simple form (as opposed to $\sE_\gamma$-divergence).  This intuition, among others, led Feldman et al. \cite{Feldman2018PrivacyAB} and Balle et al.\ \cite{Balle2019mixing}  to adopt RDP to examine the PNSGD algorithm with Gaussian noise in the framework of privacy amplification by iteration. While the former studied the problem for cases where $M=1$ (i.e., $\rho = 0$), the latter assumed $M<1$ (i.e., $\rho>0$) and derived strictly better bounds for RDP guarantees. In fact, \cite[Theorem 5]{Balle2019mixing} reduces to \cite[Theorem 23]{Feldman2018PrivacyAB} when $\rho=0$.  We wish to compare Theorem~\ref{Thm:Gaussian_DP} with these results with or without strong convexity. 
To do so, we first need to convert the RDP guarantee given in \cite[Theorem 5]{Balle2019mixing} to $(\eps, \delta)$-DP. This conversion is a standard practice in DP literature and follows from an straightforward application of \cite[Proposition 3]{RenyiDP}.   
\begin{proposition}[Adapted from \cite{Balle2019mixing}]\label{Prop:RDP_DP_Balle}
The PNSGD algorithm with Gaussian perturbation is $(\eps, \tilde \delta)$-DP for its $i$-th input where $\eps>\kappa$ and   
\begin{equation}\label{Delta_Balle}
    \tilde \delta = e^{-\frac{1}{4\kappa}(\eps-\kappa)^2},
\end{equation}
where $\kappa = \frac{2 L^2}{(n-i)\sigma^2}M^{(n-i+1)}$ if $i\in [n-1]$ and $\kappa = 2 \frac{L^2}{\sigma^2}$ if $i = n$.  
\end{proposition}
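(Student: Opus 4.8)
The plan is to combine the R\'enyi differential privacy (RDP) bound of Balle et al.\ with the standard RDP-to-DP conversion of Mironov and then optimize over the R\'enyi order. First I would recall that \cite[Theorem 5]{Balle2019mixing}, translated to the present notation and to the ``$i$-th input'' formulation, asserts that for every $\alpha>1$ and every pair of neighbouring datasets $\mathbb D\sim\mathbb D'$ differing only in the $i$-th coordinate, the PNSGD outputs satisfy $D_\alpha(\mu_0\sK_{x_1}\cdots\sK_{x_n}\|\mu_0\sK_{x'_1}\cdots\sK_{x'_n})\le \alpha\kappa$, where $\kappa$ is exactly the quantity in the statement: the amplified value $\kappa=\frac{2L^2}{(n-i)\sigma^2}M^{(n-i+1)}$ when $i\in[n-1]$ (this is where the per-step contraction factor coming from $\rho$-strong convexity enters), and the un-amplified Gaussian-mechanism value $\kappa=\frac{2L^2}{\sigma^2}$ when $i=n$ (there $2\eta L$ is the sensitivity of a single gradient step and $\eta\sigma$ the noise scale, so the effective noise-to-sensitivity ratio is $2L/\sigma$). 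In short, PNSGD is $(\alpha,\alpha\kappa)$-RDP for its $i$-th input, for every $\alpha>1$.

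Next I would invoke \cite[Proposition 3]{RenyiDP}: any $(\alpha,\tau)$-RDP mechanism is also $\big(\tau+\tfrac{\log(1/\delta)}{\alpha-1},\,\delta\big)$-DP for every $\delta\in(0,1)$. Applying this pairwise with $\tau=\alpha\kappa$ shows that, for each fixed $\alpha>1$, PNSGD is $\big(\alpha\kappa+\tfrac{\log(1/\delta)}{\alpha-1},\delta\big)$-DP for its $i$-th input. Since this holds simultaneously over all $\alpha>1$, the algorithm is $(\eps,\delta)$-DP for its $i$-th input whenever there exists $\alpha>1$ with $\eps\ge\alpha\kappa+\tfrac{\log(1/\delta)}{\alpha-1}$, equivalently $\log(1/\delta)\le(\eps-\alpha\kappa)(\alpha-1)$.

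It then remains to optimize. I would maximize $g(\alpha)\coloneqq(\eps-\alpha\kappa)(\alpha-1)$ over $\alpha>1$. Since $g$ is a downward parabola in $\alpha$, its maximizer is $\alpha^\star=\tfrac{\eps+\kappa}{2\kappa}$, which lies in $(1,\infty)$ exactly when $\eps>\kappa$, and the maximal value is $g(\alpha^\star)=\tfrac{(\eps-\kappa)^2}{4\kappa}$. Hence, for any $\eps>\kappa$, choosing $\delta=\tilde\delta\coloneqq\exp\!\big(-\tfrac{(\eps-\kappa)^2}{4\kappa}\big)\in(0,1)$ makes $\log(1/\tilde\delta)=g(\alpha^\star)$, so the displayed constraint is met (at $\alpha=\alpha^\star$), and therefore PNSGD is $(\eps,\tilde\delta)$-DP for its $i$-th input, as claimed.

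The only non-routine point is ensuring that the cited RDP bound is genuinely linear in $\alpha$ with slope exactly $\kappa$: if \cite[Theorem 5]{Balle2019mixing} carries an additional lower-order term, one either propagates it through the optimization (at the cost of a messier closed form) or notes that in the regime of interest it is dominated by the linear term, so the clean expression above is what we report. A secondary sanity check is that Balle's theorem is stated in the same ``neighbouring in the $i$-th coordinate'' sense used here, so that the pairwise RDP bound yields per-input DP directly; this is immediate from the definitions recalled just before Proposition~\ref{Prop:RDP_DP_Balle}.
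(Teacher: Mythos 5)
Your proposal is correct and follows essentially the same route as the paper: quote the $(\alpha,\alpha\kappa)$-RDP guarantee from \cite[Theorem 5]{Balle2019mixing}, apply the RDP-to-DP conversion of \cite[Proposition 3]{RenyiDP}, and optimize over $\alpha$ to get $\alpha^\star=\frac{\eps+\kappa}{2\kappa}$ and $\tilde\delta=e^{-(\eps-\kappa)^2/(4\kappa)}$ with the constraint $\eps>\kappa$ coming from $\alpha^\star>1$. The only cosmetic difference is that you state the conversion as an $\eps(\delta)$ bound and maximize $(\eps-\alpha\kappa)(\alpha-1)$, whereas the paper writes $\delta=e^{-(\alpha-1)(\eps-\lambda)}$ and minimizes directly; these are the same computation.
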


 Note that $\delta$ in Theorem~\ref{Thm:Gaussian_DP} is given in terms of $\sQ$ function and hence it is challenging to analytically compare $\delta$ with $\tilde\delta$. Nevertheless, we provide several numerical comparisons.  In Fig.~\ref{Fig:Comparison_Individual}, we compare $\delta$ in Theorem~\ref{Thm:Gaussian_DP} with $\tilde\delta$ in Proposition~\ref{Prop:RDP_DP_Balle} for the first ($i=1$), middle ($i=20$) and the second to last ($i=39$) individuals in a dataset of size $n=40$ and $\sigma = 2$ with the assumption that the loss function is not strictly convex (i.e., $\rho=0$). As clearly seen, our approach outperfoms  \cite{Feldman2018PrivacyAB} especially for the individuals whose records were processed later in the algorithm.   

In Fig.~\ref{Fig:Comparison_eta}, we focus on the effect of strong convexity parameter $\rho$ on the privacy guarantee. We again depict $\delta$ and $\tilde\delta$ for the second half of the dataset: $i=20$, $i=30$, and $i=39$ in a dataset of size $n=40$ and $\sigma = 1$. Here, we assume that the loss function is  strictly convex with parameter $\rho=0.4$. As observed in this case, Theorem~\ref{Thm:Gaussian_DP} provides better privacy in the high privacy region (i.e., small $\eps$) as well as for the individuals who appear later in the dataset for all privacy region. 
 
\begin{figure}
	\hfill
	\subfigure{\includegraphics[height =4cm, width=5.7cm]{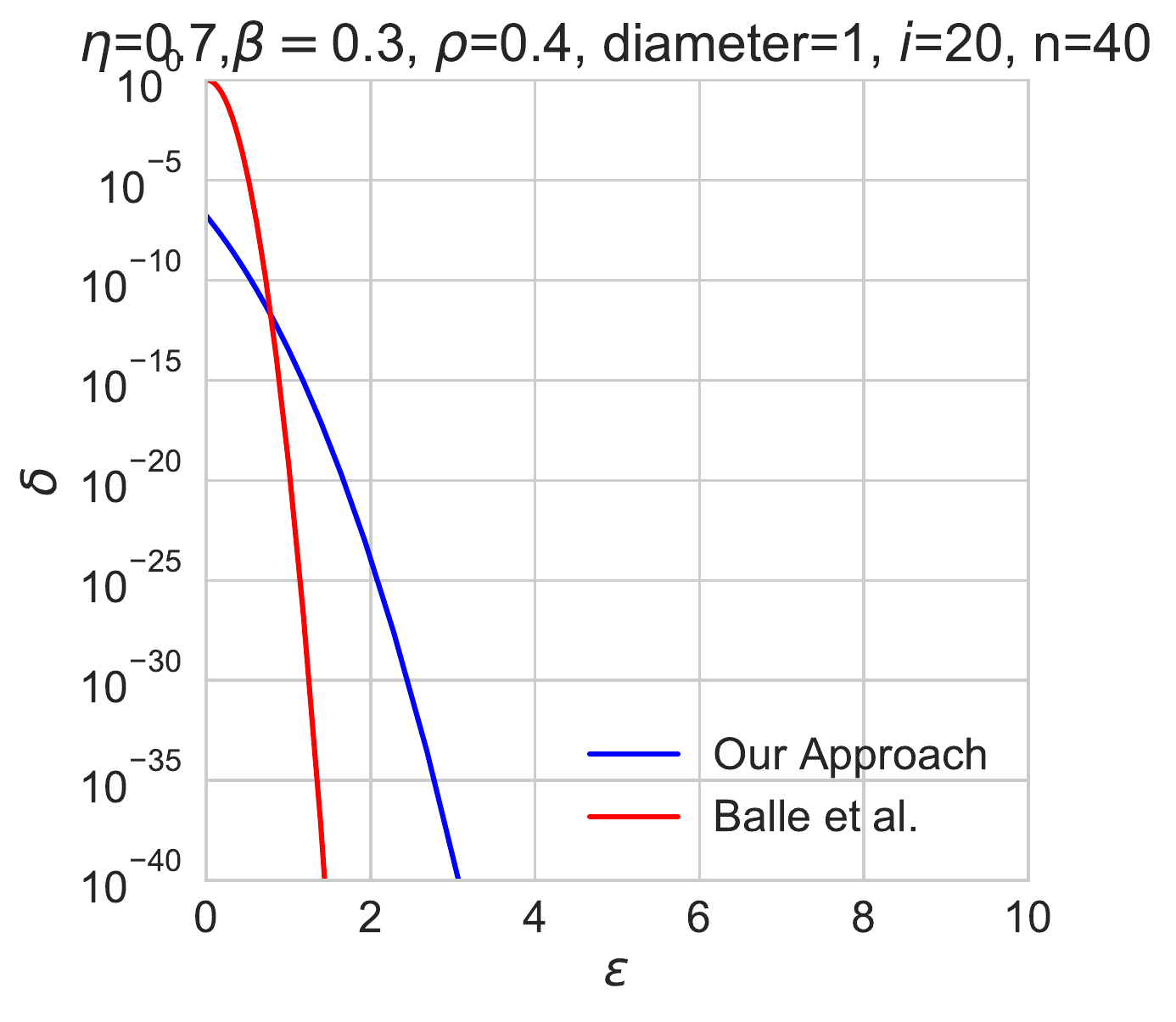}}
	\hfill
	\subfigure{\includegraphics[height =4cm,width=5.7cm]{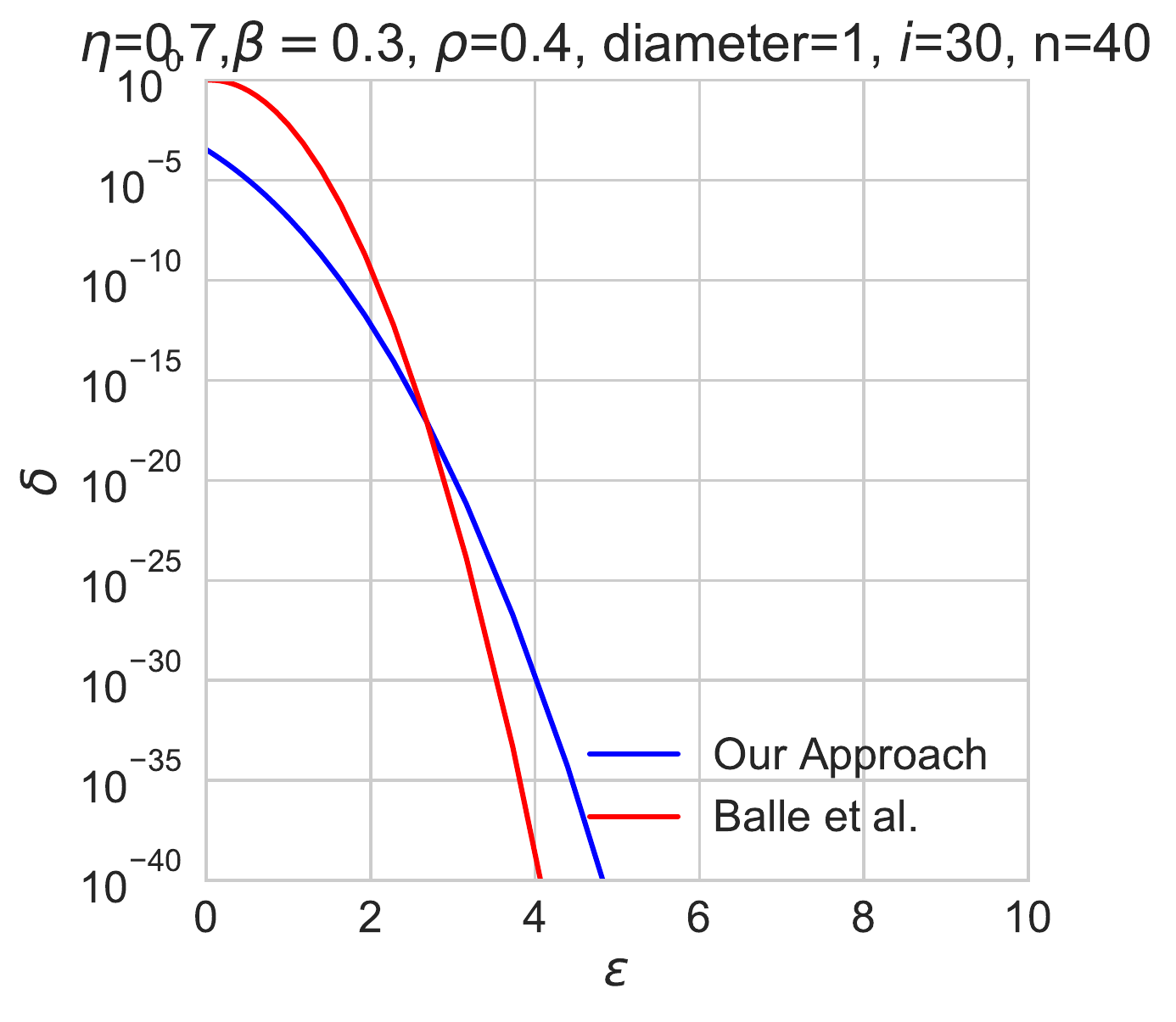}}
	\hfill
	\subfigure{\includegraphics[height =4cm,width=5.7cm]{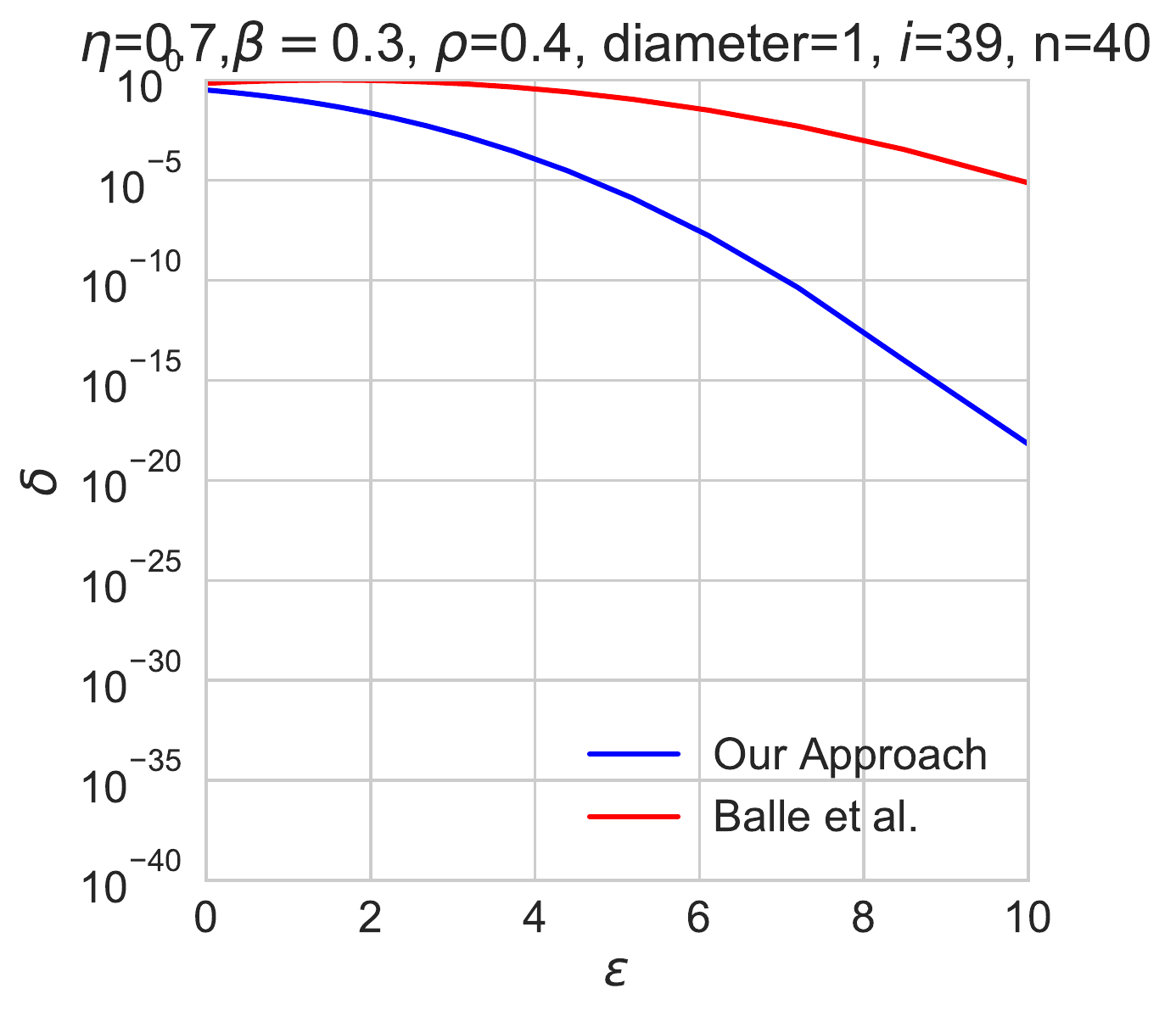}}
	\hfill
	\caption{The privacy parameters $\eps$ and $\delta$ of PNSGD with Gaussian noise having $\sigma=1$ and strongly convex loss function ($\rho =0.4$),  computed using both Theorem~\ref{Thm:Gaussian_DP} and Balle et al. \cite[Theorem 5]{Balle2019mixing} for $i=20$, $i=30$, and $i=39$ in a dataset of size $n=40$. Other parameters are as follows: $\eta =0.7, L=1, \beta =0.3$, and $D_\mathbb{K}=1$.}
	\label{Fig:Comparison_eta}
\end{figure}
\subsection{Randomly Stopped PNSGD Algorithm}
We end this section by pointing out a potential shortcoming of Theorems~\ref{Thm:Laplacian_DP} and \ref{Thm:Gaussian_DP} (and in general the privacy amplification by iteration framework): different individuals participating in the dataset experience different privacy guarantees; that is, individuals whose records were processed earlier experience higher privacy guarantee. This may not be justified in practice. To address this issue, we follow \cite{Feldman2018PrivacyAB} to consider the \textit{random stopping} for the PNSGD algorithm: namely, instead of iterating for $n$ steps, we pick a random time $T$ uniformly on $[n]$, stop the algorithm after $T$ steps and then output $Y_T$. The following theorem illuminates that such algorithm in fact \textit{uniformizes} the privacy guarantee among all individuals.   
\begin{theorem}\label{Thm:RandomTime}
Let $\mathbb{K} \subset \mathbb{R}^d$ be a compact and convex set. The  randomly-stopped PNSGD algorithm with Gaussian noise is $(\eps, \delta)$-DP with $\eps\geq 0$ and
\begin{equation}\label{delta_RandomTime}
    \delta = \frac{1}{n}\theta_{e^\eps}\left(\frac{2 L}{\sigma}\right)\left(1-\theta_{e^\eps}\left(\frac{MD_{\mathbb K}}{\eta \sigma}\right)\right)^{-1}.
\end{equation}
 \end{theorem}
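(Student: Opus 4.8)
The plan is to recognize the output of the randomly-stopped PNSGD as a uniform mixture of the intermediate iterates, and then to exploit the joint convexity of the $\sE_\gamma$-divergence together with the per-step estimates that already underlie Theorem~\ref{Thm:Gaussian_DP}. The point is that because the stopping time is drawn independently of the data, the comparison for neighboring datasets is between two mixtures with \emph{identical} weights, so $\sE_{e^\eps}$ contracts under the mixing.

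Concretely, fix neighboring datasets $\mathbb D\sim\mathbb D'$ differing only in the $i$-th coordinate, write $\mu_t^{\mathbb D}\coloneqq\mu_0\sK_{x_1}\cdots\sK_{x_t}$ (and $\mu_t^{\mathbb D'}$ analogously), and note that releasing only $Y_T$ with $T$ uniform on $[n]$ makes the mechanism's output law equal to $\bar\mu_{\mathbb D}\coloneqq\frac1n\sum_{t=1}^n\mu_t^{\mathbb D}$. Since $\sE_{e^\eps}$ is an $f$-divergence it is jointly convex, so
$\sE_{e^\eps}(\bar\mu_{\mathbb D}\|\bar\mu_{\mathbb D'})\le\frac1n\sum_{t=1}^n\sE_{e^\eps}(\mu_t^{\mathbb D}\|\mu_t^{\mathbb D'})$. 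For $t<i$ the first $t$ coordinates of $\mathbb D$ and $\mathbb D'$ agree, hence $\mu_t^{\mathbb D}=\mu_t^{\mathbb D'}$ and that term vanishes.

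For $i\le t\le n$, I would apply Lemma~\ref{Lemma:DPIIterative} (specialized to $\sE_{e^\eps}$) to the length-$t$ truncated process, obtaining $\sE_{e^\eps}(\mu_t^{\mathbb D}\|\mu_t^{\mathbb D'})\le\sE_{e^\eps}(\zeta_{x_i}\|\omega_{x_i})\prod_{s=i+1}^t\eta_{e^\eps}(\sK_{x_s})$ with $\zeta_{x_i}=\mu_{i-1}\sK_{x_i}$, $\omega_{x_i}=\mu_{i-1}\sK_{x_i'}$. The two bounds driving Theorem~\ref{Thm:Gaussian_DP} — the $L$-Lipschitzness of $y\mapsto\ell(y,x)$, the $M$-contractivity of the gradient step $\psi_x$, the non-expansiveness of $\Pi_{\mathbb K}$, joint convexity, and the identity \eqref{theta_Gaussian} applied through Theorem~\ref{Thm:Contraction_EGamma} — give $\sE_{e^\eps}(\zeta_{x_i}\|\omega_{x_i})\le\theta_{e^\eps}(2L/\sigma)$ and $\eta_{e^\eps}(\sK_{x_s})\le\theta_{e^\eps}(MD_{\mathbb K}/(\eta\sigma))$ for every $s$, whence $\sE_{e^\eps}(\mu_t^{\mathbb D}\|\mu_t^{\mathbb D'})\le\theta_{e^\eps}(2L/\sigma)\,\theta_{e^\eps}(MD_{\mathbb K}/(\eta\sigma))^{\,t-i}$.

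Summing over $t$ and re-indexing $k=t-i$ yields $\sE_{e^\eps}(\bar\mu_{\mathbb D}\|\bar\mu_{\mathbb D'})\le\frac1n\theta_{e^\eps}(2L/\sigma)\sum_{k=0}^{n-i}\theta_{e^\eps}(MD_{\mathbb K}/(\eta\sigma))^k$; since $\theta_\gamma(r)=\sQ(\log\gamma/r-r/2)-\gamma\sQ(\log\gamma/r+r/2)<1$ for every finite $r$, I bound the finite geometric sum by $\bigl(1-\theta_{e^\eps}(MD_{\mathbb K}/(\eta\sigma))\bigr)^{-1}$, which is exactly the $\delta$ in \eqref{delta_RandomTime} and, crucially, independent of $i$ — this is the claimed uniformization. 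Theorem~\ref{Thm:Balle} then upgrades this $\sE_{e^\eps}$ bound to $(\eps,\delta)$-DP. I do not expect a genuine obstacle here; the only points needing care are justifying that outputting $Y_T$ (and not $T$) produces a fixed-weight mixture, invoking joint convexity of $\sE_\gamma$ at the level of the $\int\mu_{i-1}(\mathrm dy)$ mixtures as well as of $\frac1n\sum_t$, and being careful with the indexing when Lemma~\ref{Lemma:DPIIterative} is specialized to truncated processes of varying length.
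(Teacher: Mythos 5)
Your proposal is correct and follows essentially the same route as the paper's proof: decompose the randomly-stopped output as the uniform mixture $\frac1n\sum_{r}\mu_0\sK_{x_1}\cdots\sK_{x_r}$, apply joint convexity of $\sE_{e^\eps}$ to pass the divergence inside the sum, drop the terms with $r<i$, invoke the per-iteration bounds from Theorem~\ref{Thm:Gaussian_DP}, and bound the resulting geometric sum by $\bigl(1-\theta_{e^\eps}(MD_{\mathbb K}/(\eta\sigma))\bigr)^{-1}$ before converting back to $(\eps,\delta)$-DP via Theorem~\ref{Thm:Balle}. No substantive differences to report.
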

The randomly stopped PNSGD was first proposed by Feldman et al.\ \cite{Feldman2018PrivacyAB} where they derived its RDP guarantee in \cite[Theorem 26]{Feldman2018PrivacyAB} \textit{only} if $\sigma$ satisfies a certain constraint. This constraint is due to the non-convexity of the map $(\mu, \nu)\mapsto D_\alpha(\mu\|\nu)$. In contrast, since $(\mu, \nu)\mapsto \sE_\gamma(\mu\|\nu)$ is jointly convex (as for any other $f$-divergences), Theorem~\ref{Thm:RandomTime} holds for any $\sigma$.  

Another approach to address the non-uniformity of privacy guarantees is to \textit{permute} the dataset first, via a random permutation and then feed it to the PNSGD algorithm. We will examine this approach in our future work.    

\small
\bibliographystyle{IEEEtran}
\bibliography{reference}

\normalsize
\appendix

\section{Deferred Proofs}

\begin{proof}[Proof of Theorem~\ref{Thm:Contraction_EGamma}]
Balle et al.\ \cite{Balle2019mixing} recently showed that, for $\delta\in(0,1)$,
\begin{equation}
\label{eq:Balle}
    \sup_{\mu, \nu:\atop \sE_\gamma(\mu\|\nu)\leq \delta}\frac{\sE_{\gamma}(\mu \sK\|\nu\sK)}{\sE_\gamma(\mu\|\nu)}\leq \sup_{y_1\neq y_2}\sE_{\gamma'}(\sK(y_1)\|\sK(y_2)),
\end{equation}
where $\gamma'\coloneqq 1+\frac{\gamma-1}{\delta}$. Now we show that the above inequality is indeed an equality. Let $y_1,y_2\in\mathcal{Y}$ be such that $y_1\neq y_2$. We define $\mu_\delta = \bar{\delta} \delta_{y_0}+ \delta \delta_{y_1}$ and $\nu_\delta =  (\bar{\delta}/\gamma) \delta_{y_0}+ (1-\bar{\delta}/\gamma)\delta_{y_2}$ where $\bar{\delta} \coloneqq 1 - \delta$ and $y_0\notin\{y_1,y_2\}$. In particular,
\begin{equation*}
    \sE_\gamma(\mu_\delta\| \nu_\delta) = \delta.
\end{equation*}
It is straightforward to verify that $\mu_\delta \sK = \bar{\delta} K(y_0) + \delta K(y_1)$ and $\nu_\delta \sK = (\bar{\delta}/\gamma) K(y_0) + (1-\bar{\delta}/\gamma)K(y_2)$. Hence, by \eqref{eq:DefEgamma},
\begin{align*}
	\sE_\gamma(\mu_\delta\sK\| \nu_\delta\sK) 
	&= \delta\int_{\Y}\left[\text{d}(\sK(y_1) - \gamma'\sK(y_2))(y)\right]_+\\
	&= \delta\sE_{\gamma'}(\sK(y_1)\|\sK(y_2)).
\end{align*}
Therefore, we obtain that
\begin{align*}
    \sup_{\mu,\nu:\atop \sE_\gamma(\mu\|\nu)\leq \delta}\frac{\sE_{\gamma}(\mu \sK\|\nu\sK)}{\sE_\gamma(\mu\|\nu)}
    &\geq \frac{\sE_{\gamma}(\mu_\delta \sK\|\nu_\delta\sK)}{\sE_\gamma(\mu_\delta\|\nu_\delta)}\\
    &= \sE_{\gamma'}(\sK(y_1)\|\sK(y_2)).
\end{align*}
Since $y_1$ and $y_2$ are arbitrary, we conclude that the reversed version of inequality \eqref{eq:Balle} holds true and hence
\begin{equation}
\label{eq:BalleEquality}
    \sup_{\mu,\nu:\atop \sE_\gamma(\mu\|\nu)\leq \delta}\frac{\sE_{\gamma}(\mu \sK\|\nu\sK)}{\sE_\gamma(\mu\|\nu)} = \sup_{y_1\neq y_2} \sE_{\gamma'}(\sK(y_1)\|\sK(y_2)),
\end{equation}
with $\gamma' = 1 + \frac{\gamma-1}{\delta}$. It is easy to verify that for fixed $\mu$ and $\nu$, $\gamma \mapsto \sE_\gamma(\mu||\nu)$ is continuous and decreasing. Hence, by taking the limit as $\delta\to1$ in \eqref{eq:BalleEquality}, the result follows.
\end{proof}	

\begin{proof}[Proof of Lemma~\ref{Lemma:EgammaLaplace}]
For ease of notation, let $\mathcal{L}_i=\mathcal{L}(m_i, v)$ for $i=1, 2$. It follows from the definition that 
\al{\sE_\gamma(\L_1\|\L_2)&= \frac{1}{2v}\int_{-\infty}^\infty \left[e^{-\frac{|t-m_1|}{v}} - e^{\frac{v\log\gamma-|t-m_2|}{v}}\right]_+\text{d}t\\
& = \frac{1}{2v}\int_{-\infty}^\infty \left[e^{-\frac{|t-\tilde m|}{v}} - e^{\frac{v\log\gamma-|t|}{v}}\right]_+\text{d}t
}
where $\tilde m = m_2-m_1$ (assuming $m_2\geq m_1$). Clearly, the above integral is non-zero only if $\tilde m\geq v\log\gamma$. With this in mind, we can write 
\al{\sE_\gamma(\L_1\|\L_2)&= \frac{1}{2v}\int_{\frac{\tilde m +v\log\gamma}{2}}^\infty \left[e^{-\frac{|x-\tilde m|}{v}} - e^{\frac{v\log\gamma-|x|}{v}}\right]\text{d}x\\
& = 1- e^{\frac{v\log\gamma -\tilde m}{2v}},}
and hence $$\sE_\gamma(\L_1\|\L_2) =\left[1- e^{\frac{v\log\gamma -\tilde m}{2v}}\right]_+.$$
The case $m_1\geq m_2$ is similar. In general, we can write
$$\sE_\gamma(\L_1\|\L_2) =\left[1- e^{\frac{v\log\gamma -|m_1-m_2|}{2v}}\right]_+.$$
\end{proof}

\begin{proof}[Proof of Lemma~\ref{Lemma:EgammaGaussian}]
Recall that $\mathcal{N}_i = \mathcal{N}(m_i,\sigma^2{\bf I}_d)$ with $i=1,2$. A direct computation shows that
\begin{equation*}
    \iota_{\mathcal{N}_1\|\mathcal{N}_2}(t) = \frac{2\langle t,m_1-m_2 \rangle + \|m_2\|^2 - \|m_1\|^2}{2\sigma^2}.
\end{equation*}
Thus, if we let $\beta=\frac{\|m_1-m_2\|}{\sigma}$, then
\begin{equation*}
    \iota_{\mathcal{N}_1\|\mathcal{N}_2}(Y) = \begin{cases} \N(0.5\beta^2, \beta^2) & Y\sim\N(m_1, \sigma^2\mathbf{I}_d),\\ \N(-0.5\beta^2, \beta^2) & Y\sim\N(m_2, \sigma^2\mathbf{I}_d).\end{cases}
\end{equation*}
Therefore, by expression for $\sE_\gamma$ given in \eqref{eq:EgammaPDif},
\begin{align}
    \sE_\gamma(\mathcal{N}_1\|\mathcal{N}_2) &= \sE_{\gamma}(\N(m_1, \sigma^2\mathbf{I}_d)\| \N(m_2, \sigma^2\mathbf{I}_d)) \nonumber\\
    &=\sQ\left(\frac{\log\gamma}{\beta} 
- \frac{1}{2}\beta\right) - \gamma \sQ\left(\frac{\log\gamma}{\beta} + \frac{1}{2}\beta\right), \nonumber
\end{align}
where $\sQ(t) = \Pr(\N(0,1)\geq t) =\int_t^\infty\frac{1}{\sqrt{2\pi}}e^{-u^2/2}\text{d}u$. 
\end{proof}

\begin{proof}[Proof of Lemma~\ref{Lemma:DPIIterative}]
Observe that $x_t = x_t'$ for all $t<i$ and $t>i$. In particular,
\begin{equation*}
    \mu_{i-1} \coloneqq \mu_0 \sK_{x_1} \cdots \sK_{x_{i-1}} = \mu_0 \sK_{x_1'} \cdots \sK_{x_{i-1}'}.
\end{equation*}
Let $\Delta = D_f(\mu_{0}\sK_{x_1}\cdots \sK_{x_n}\|\mu_{0}\sK_{x'_1}\cdots \sK_{x_n'})$. By an iterative application of the data processing inequality, we conclude that
\begin{align*}
    \Delta &= D_f(\mu_{i-1} \sK_{x_i} \sK_{x_{i+1}} \cdots \sK_{x_n}\| \mu_{i-1} \sK_{x_i'} \sK_{x_{i+1}} \cdots \sK_{x_n})\\
    &\leq D_f(\mu_{i-1}\sK_{x_i}\|\mu_{i-1}\sK_{x'_i})\prod_{t=i+1}^n\eta_f(\sK_{x_t}),
\end{align*}
as desired.
\end{proof}

\begin{proof}[Proof of Theorem~\ref{Thm:Laplacian_DP}] Let $\mathbb{D} = \{x_t\}_{t=1}^n$ and $\mathbb{D}' = \{x_t'\}_{t=1}^n$ be two neighbouring datasets with $x_i \neq x'_i$. By Theorem~\ref{Thm:Balle}, it is enough to show that
\begin{equation*}
    \Delta \coloneqq \sE_{e^\eps}(\mu_0 \sK_{x_1} \cdots \sK_{x_n} \| \mu_0 \sK_{x_1'} \cdots \sK_{x_n'}) \leq \delta,
\end{equation*}
where $\mu_0$ is the initial distribution of PNSGD and $\sK_{x}$ models the update rule in \eqref{SGD_Update_Rulde}. Following \eqref{Eq:EGamma_PNSGD}, we obtain that 
\begin{equation}\label{Lap0}
    \Delta \leq \sE_{e^\eps}(\zeta_{x_i}\|\omega_{x_{i}})\prod_{t=i+1}^n\eta_\gamma(\sK_{x_t})
\end{equation}
where $\sK_{x_t}$ is a projected additive Laplacian kernel,  $\zeta_i = \mu_0\sK_{x_1}\dots\sK_{x_{i}}$ and $\omega_i = \mu_0\sK_{x'_1}\dots\sK_{x'_{i}}.$ 

We begin by recalling Lemma~\ref{Lemma:EgammaLaplace} and the data processing inequality to bound $\eta_{e^\eps}(\sK_{x_t})$ as 
\begin{eqnarray*}
    \eta_{e^\eps}(\sK_{x_t}) &=& \sup_{y_1, y_2\in \mathbb K}\sE_{e^\eps}(\Pi_{\mathbb K}\mathsf{Lap}(\psi_{x_t}(y_1), \eta v)\|\Pi_{\mathbb K}\mathsf{Lap}(\psi_{x_t}(y_2), \eta v))\nonumber\\
    &\leq&\sup_{y_1, y_2\in \mathbb K}\sE_{e^\eps}(\mathsf{Lap}(\psi_{x_t}(y_1), \eta v)\|\mathsf{Lap}(\psi_{x_t}(y_2), \eta v))\\
&=&\sup_{y_1, y_2\in \mathbb K}\left[1-e^{\frac{\eps}{2} - \frac{|\psi_{x_t}(y_1)-\psi_{x_t}(y_2)|}{2\eta v}}\right]_+
\end{eqnarray*}
To refine the above bound for $\eta_{e^\eps}$, we resort to the following standard result in convex optimization, see, e.g., \cite{Nesterov:ILC} or \cite[Theorem 3.12]{Convex_Bubeck}. Notice that we say a function $f:\mathbb K\to \R^d$ is $\beta$-\textit{smooth} if $a\mapsto \nabla f(a)$ is $\beta$-Lipschitz.  
\begin{lemma}\label{Lemma_M}
	Let $\mathbb K\subset \R^d$ be a convex set and suppose $f:\mathbb K\to \R^d$ is $\beta$-smooth and $\rho$-strongly convex. If $\eta<\frac{2}{\beta+\rho}$, then the map $\psi(a) = a-\eta\nabla f(a)$ is $M$-Lipschitz on $\mathbb K$ with $M=\sqrt{1-\frac{2\eta\beta\rho}{\beta+\rho}}$.
\end{lemma}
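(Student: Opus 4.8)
The plan is to prove the contraction estimate $\|\psi(x)-\psi(y)\|\le M\|x-y\|$ for all $x,y\in\mathbb{K}$ by expanding $\|\psi(x)-\psi(y)\|^2$ and controlling the cross term using the standard interpolation inequality for functions that are simultaneously $\rho$-strongly convex and $\beta$-smooth. The key analytic ingredient (this is \cite[Theorem~3.12]{Convex_Bubeck}, or \cite{Nesterov:ILC}) is that for all $x,y\in\mathbb{K}$,
\[
\langle \nabla f(x)-\nabla f(y),\, x-y\rangle \;\ge\; \frac{\beta\rho}{\beta+\rho}\|x-y\|^2 + \frac{1}{\beta+\rho}\|\nabla f(x)-\nabla f(y)\|^2 .
\]
This follows by applying gradient co-coercivity to the convex, $(\beta-\rho)$-smooth function $g=f-\tfrac{\rho}{2}\|\cdot\|^2$ and rearranging; it is legitimate on the restricted domain because $\mathbb{K}$ is convex, so the segment joining $x$ and $y$ stays in $\mathbb{K}$.

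Granting this inequality, the argument is a short computation. Writing $\psi(x)-\psi(y)=(x-y)-\eta(\nabla f(x)-\nabla f(y))$ and expanding the square gives
\[
\|\psi(x)-\psi(y)\|^2 = \|x-y\|^2 - 2\eta\langle\nabla f(x)-\nabla f(y),x-y\rangle + \eta^2\|\nabla f(x)-\nabla f(y)\|^2 .
\]
Substituting the interpolation inequality into the middle term yields
\[
\|\psi(x)-\psi(y)\|^2 \le \Big(1-\frac{2\eta\beta\rho}{\beta+\rho}\Big)\|x-y\|^2 + \eta\Big(\eta-\frac{2}{\beta+\rho}\Big)\|\nabla f(x)-\nabla f(y)\|^2 .
\]
Since $\eta<\tfrac{2}{\beta+\rho}$, the coefficient $\eta(\eta-\tfrac{2}{\beta+\rho})$ is nonpositive, so the last term can be discarded, leaving $\|\psi(x)-\psi(y)\|^2\le M^2\|x-y\|^2$ with $M^2=1-\tfrac{2\eta\beta\rho}{\beta+\rho}$; taking square roots gives the claim. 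One also checks $M$ is well defined: the hypothesis forces $\tfrac{2\eta\beta\rho}{\beta+\rho}<\tfrac{4\beta\rho}{(\beta+\rho)^2}\le 1$ by AM--GM, so $M^2\in[0,1]$ (and $M<1$ whenever $\rho>0$, i.e.\ $\psi$ is then a genuine contraction).

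The only non-routine step is the interpolation inequality itself, and it is exactly what produces the sharp constant $M=\sqrt{1-2\eta\beta\rho/(\beta+\rho)}$: using only the cruder bounds $\langle\nabla f(x)-\nabla f(y),x-y\rangle\ge\rho\|x-y\|^2$ (strong convexity) and $\|\nabla f(x)-\nabla f(y)\|\le\beta\|x-y\|$ ($\beta$-smoothness) would instead give the weaker Lipschitz constant $\sqrt{1-2\eta\rho+\eta^2\beta^2}$. I would therefore either invoke the inequality as a known convex-optimization fact or include its two-line derivation from co-coercivity; the remaining quadratic bookkeeping above is straightforward, so I do not anticipate any real obstacle.
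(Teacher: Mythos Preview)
Your proof is correct and is precisely the standard argument. Note, however, that the paper does not actually prove this lemma: it is stated inside the proof of Theorem~\ref{Thm:Laplacian_DP} and simply cited as a known convex-optimization fact from \cite{Nesterov:ILC} and \cite[Theorem~3.12]{Convex_Bubeck}. Your write-up is essentially the proof one finds in those references (expand the square, apply the co-coercivity/interpolation inequality, use $\eta<2/(\beta+\rho)$ to drop the gradient term), so you have supplied what the paper only quotes.
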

In light of this lemma, we have $|\psi_{x_t}(y_1) -\psi_{x_t}(y_2)|\leq M |y_1-y_2|\leq M(b-a)$ for any $y_1, y_2\in \mathbb K$. Hence, we obtain from above 
\begin{equation}\label{Lap1}
    \eta_{e^\eps}(\sK_{x_t}) \leq \left[1-e^{\frac{\eps}{2} - \frac{M(b-a)}{2\eta v}}\right]_+
\end{equation}
On the other hand, we use Jensen's inequality to compute $\sE_\gamma(\zeta_i\|\omega_i)$:
\begin{align}
	\sEs(\zeta_{x_i}\|\omega_{x_i})&\leq \int \sE_{e^\eps}(\sK_{x_i}(a)\|\sK_{x'_i}(a))\mu_{i-1}(\text{d}a) \nonumber\\
	&\leq  \int \sEs(\mathsf{Lap}(\psi_{x_i}(a), \eta v)\|\mathsf{Lap}(\psi_{x'_i}(a), \eta v))\mu_{i-1}(\text{d}a) \nonumber\\
	& = \int\left[1-e^{\frac{\eps}{2}-\frac{|\psi_{x_i}(a)-\psi_{x'_i}(a)|}{2\eta v}}\right]_+ \mu_{i-1}(\text{d}a)\nonumber\\
	&\leq \left[1-e^{\frac{\eps}{2}-\frac{L}{v}}\right]_+\label{Lap2}
\end{align}	
where the first inequality is due to the Jensen's inequality (note that $(\mu, \nu)\mapsto \sE_\gamma(\mu\|\nu)$ is jointly convex for all $\gamma>1$ as for any other $f$-divergences) and  the last inequality comes from the fact that 
\begin{align*}
    |\psi_{x_i}(a)-\psi_{x'_i}(a)| &=\eta|\nabla_a\ell(a, x_i)-\nabla_a\ell(a, x'_i)|\\
    &\leq \eta[|\nabla_a\ell(a, x_i)|+|\nabla_a\ell(a, x'_i)|]\leq 2\eta L,
\end{align*}
where we use the fact that $y \mapsto \ell(y,x)$ is $L$-Lipschitz for all $x\in\mathcal{X}$. 
Plugging \eqref{Lap1} and \eqref{Lap2} into \eqref{Lap0}, we conclude the proof. 
\end{proof}

\begin{proof}[Proof of Theorem~\ref{Thm:Gaussian_DP}]
Let $\mathbb{D} = \{x_t\}_{t=1}^n$ and $\mathbb{D}' = \{x_t'\}_{t=1}^n$ be two neighbouring datasets with $x_i \neq x'_i$. By Theorem~\ref{Thm:Balle}, it is enough to show that
\begin{equation*}
    \Delta \coloneqq \sE_{e^\eps}(\mu_0 \sK_{x_1} \cdots \sK_{x_n} \| \mu_0 \sK_{x_1'} \cdots \sK_{x_n'}) \leq \delta,
\end{equation*}
where $\mu_0$ is the initial distribution of PNSGD and $\sK_{x}$ models the update rule in \eqref{SGD_Update_Rulde}. Following \eqref{Eq:EGamma_PNSGD}, we obtain that 
\begin{equation}\label{Gau0}
    \Delta \leq \sE_{e^\eps}(\zeta_{x_i}\|\omega_{x_{i}})\prod_{t=i+1}^n\eta_\gamma(\sK_{x_t}),
\end{equation}
where $\sK_{x_t}$ is a projected additive Gaussian kernel,  $\zeta_{x_i} = \mu_0\sK_{x_1}\dots\sK_{x_{i}} = \mu_{i-1}\sK_{x_i}$ and $\omega_{x_i} = \mu_0\sK_{x'_1}\dots\sK_{x'_{i}}= \mu_{i-1}\sK_{x'_i}.$ We begin by recalling Lemma~\ref{Lemma:EgammaGaussian} and data processing inequality to  bound $\eta_\gamma(\sK_{x_t})$, for $\gamma = e^\eps$, as
\begin{align}
\eta_\gamma(\sK_{x_t}) &= \sup_{y_1\neq y_2\in \mathbb K} \sE_\gamma(\Pi_{\mathbb K}\N(\psi_{x_t}(y_1), \eta^2 \sigma^2\mathbf{I})\|\Pi_{\mathbb K}\N(\psi_{x_t}(y_2), \eta^2 \sigma^2\mathbf{I}))\nonumber\\
&= \sup_{\|y_1-y_2\|\leq D_{\mathbb K}} \sE_\gamma(\Pi_{\mathbb K}\N(\psi_{x_t}(y_1), \eta^2 \sigma^2\mathbf{I})\|\Pi_{\mathbb K}\N(\psi_{x_t}(y_2), \eta^2 \sigma^2\mathbf{I}))\nonumber\\
&\leq \sup_{\|y_1-y_2\|\leq D_{\mathbb K}} \sE_\gamma(\N(\psi_{x_t}(y_1), \eta^2 \sigma^2\mathbf{I})\|\N(\psi_{x_t}(y_2), \eta^2 \sigma^2\mathbf{I}))\nonumber\\
&=\sup_{\|y_1-y_2\|\leq D_{\mathbb K}}  \theta_\gamma\left(\frac{\|\psi_{x_t}(y_1)-\psi_{x_t}(y_2)\|}{\eta\sigma}\right)\nonumber\\
&\leq   \theta_\gamma\left(\frac{MD_{\mathbb K}}{\eta\sigma}\right),\label{Compact1}
\end{align}
where the last inequality comes from Lemma~\ref{Lemma_M} and the fact that $r \mapsto \theta_\gamma(r)$ is increasing.

On the other hand, we use Jensen's inequality to bound $\sE_{\gamma}(\zeta_{x_i}\|\omega_{x_i})$ as 
\begin{eqnarray}
	\sE_\gamma(\zeta_i\|\omega_i)&\leq & \int \sE_\gamma(\sK_{x_i}(a)\|\sK_{x'_i}(a))\mu_{i-1}(\text{d}a) \nonumber\\
	&=& \int \theta_\gamma\left(\frac{\|\psi_{x_i}(a)-\psi_{x'_i}(a)\|}{\eta\sigma}\right)\mu_{i-1}(\text{d}a)\nonumber\\
	&\leq&\theta_\gamma\left(\frac{2 L}{\sigma}\right), \label{Compact_2}
	\end{eqnarray}
where the first inequality is due to the Jensen's inequality (note that $(\mu, \nu)\mapsto \sE_\gamma(\mu\|\nu)$ is jointly convex for all $\gamma>1$ as for any other $f$-divergences) and the last inequality and the second inequality stems from the following 
\begin{align*}
    \|\psi_{x_i}(a)-\psi_{x'_i}(a)\| &=\eta\|\nabla\ell(a, x_i)-\nabla\ell(a, x'_i)\|\\
    &\leq \eta[\|\nabla\ell(a, x_i)\|+\|\nabla\ell(a, x'_i)\|]\leq 2\eta L.
\end{align*}
Plugging \eqref{Compact1} and \eqref{Compact_2} into \eqref{Gau0}, we conclude the proof. 
\end{proof}

 \begin{proof}[Proof of Proposition~\ref{Prop:RDP_DP_Balle}]
First note that both \cite[Theorem 23]{Feldman2018PrivacyAB} and \cite[Theorem 5]{Balle2019mixing} quantifies the RDP guarantee of PNSGD. However, as the latter strictly improves the former under the assumption of strong convexity, we only consider the former and convert it to DP guarantee.  Given $\alpha>1$, a mechanism $\M$ is said to be $(\alpha, \lambda)$-RDP \cite{RenyiDP} if $D_\alpha(P_{\mathbb D}\|P_{\mathbb {D'}})\leq \lambda$ for all $\mathbb D\sim \mathbb D'$ where $P_{\mathbb D}$ is the output distribution of $\M$ running on the dataset $\mathbb D$. As before, this definition can be adapted as follows: $\M$ is said to be $(\alpha, \lambda)$-RDP for its $i$-th input if the above holds for dataset $\mathbb D$ and $\mathbb D'$ differing in the $i$-th coordinate. 
It is shown in \cite[Proposition 3]{RenyiDP} and \cite[Theorem 2]{Abadi_MomentAccountant} that 
\begin{equation}\label{RDP_to_DP}
    \M ~\text{is}~ (\alpha, \lambda)\text{-RDP} \Longrightarrow \M ~\text{is}~ (\eps, \delta(\eps, \alpha))\text{-DP}, 
\end{equation}
where 
$$\delta(\eps, \alpha) \coloneqq e^{-(\alpha-1)(\eps-\lambda)}.$$ Note that in general $\lambda$ is a function of $\alpha$, hence $\delta$ depends on $\eps$ and $\alpha$. 

Balle et al. \cite[Theorem 5]{Balle2019mixing} proved that  PNSGD is $(\alpha, \kappa\alpha)$-RDP for its $i$th input and $\alpha>1$ where $$\kappa  = \frac{2 L^2}{n-i}M^{(n-i+1)},$$
for $i\in[n-1]$ and $\kappa =2\alpha L^2$ for $i=n$, where $M$ is defined in Lemma~\ref{Lemma_M}. 
We use \eqref{RDP_to_DP} to obtain the \textit{best} DP $\delta$ parameter: 
PNSGD  is $(\eps, \tilde \delta)$-DP for   
\begin{equation}\label{Feldman_delta}
    \tilde \delta= \inf_{\alpha>1}e^{-(\alpha-1)(\eps-\kappa\alpha)}. 
\end{equation}
Solving this minimization problem, we obtain that the minimizer is $\alpha^* =\frac{\eps+\kappa}{2\kappa}$ and the minimum value is 
\begin{equation}
    \tilde \delta = e^{-\frac{1}{4\kappa}(\eps-\kappa)^2}.
\end{equation}
Notice that since $\alpha>1$, we must have $\eps>\kappa$. 
\end{proof}

\begin{proof}[Proof of Theorem~\ref{Thm:RandomTime}]
Let $T$ be a uniform random variable on the set $[n]$. We assume that the PNSGD algorithm stops at the random time $T$. Recall that we are given two dataset $\mathbb D=\{x_1, \dots, x_n\}$ and $\mathbb D'= \{x'_1, \dots, x'_n\}$  with $x_j=x'_j$ for all $j\in [n]$ except $j=i$. Let $\mu_T$ and $\nu_T$ be the output distribution of the randomly-stopped PNSGD algorithm running on $\mathbb D$ and $\mathbb D'$, respectively. We can write 
$$\mu_T = \frac{1}{n} \sum_{r=1}^n \mu_0 \sK_{x_1}\dots \sK_{x_r}, $$
and 
$$\nu_T = \frac{1}{n}\sum_{r=1}^{n} \mu_{0}\sK_{x'_1}\dots \sK_{x'_r}.$$
Hence, the 
convexity of $(P, Q)\mapsto \sE_\gamma(P\|Q)$ and Jensen's inequality imply that
\begin{equation*}
  \mathbb \sE_{e^\eps}(\mu_T\|\nu_T) \leq  \frac{1}{n} \sum_{r = 1}^n   \mathbb \sE_{e^\eps}( \mu_0 \sK_{x_1}\dots \sK_{x_r} \| \mu_0 \sK_{x_1'}\dots \sK_{x_r'}).
\end{equation*}
Recall that $x_j = x_j'$ for all $j\neq i$. In particular, $\mu_0 \sK_{x_1}\dots \sK_{x_r} = \mu_0 \sK_{x_1'}\dots \sK_{x_r'}$ for all $r<i$ and hence
\begin{equation*}
    \mathbb \sE_{e^\eps}(\mu_T\|\nu_T) \leq  \frac{1}{n} \sum_{r = i}^n   \mathbb \sE_{e^\eps}(\mu_{i-1}\sK_{x_i}\dots \sK_{x_r}\|\mu_{i-1}\sK_{x'_i}\dots \sK_{x'_r}).
\end{equation*}
Finally, by applying Theorem~\ref{Thm:Gaussian_DP},
\begin{align*}
  \mathbb \sE_{e^\eps}(\mu_T\|\nu_T) &\leq  \frac{\theta_{e^\eps}\left(\frac{2L}{\sigma}\right)}{n} \sum_{r = i}^n \theta_{e^\eps}\left(\frac{MD_{\mathbb K}}{\eta\sigma}\right)^{r-i}\\
  &=\frac{\theta_{e^\eps}\left(\frac{2L}{\sigma}\right)}{n} \frac{1-\theta_{e^\eps}\left(\frac{MD_{\mathbb K}}{\eta\sigma}\right)^{n-i+1}}{1-\theta_{e^\eps}\left(\frac{MD_{\mathbb K}}{\eta\sigma}\right)} \\
  &\leq  \frac{\theta_{e^\eps}\left(\frac{2L}{\sigma}\right)}{n} \left(1-\theta_{e^\eps}\left(\frac{MD_{\mathbb K}}{\eta\sigma}\right)\right)^{-1},
\end{align*}
as we wanted to prove.
\end{proof}

\end{document}